\documentclass[english,journal]{extarticle}
\usepackage{fontenc}
\usepackage[utf8]{inputenc}
\usepackage{color}
\usepackage{subfigure}
\usepackage{graphicx}
\usepackage{babel}
\usepackage{float}
\usepackage{amsthm}
\usepackage{amsmath}
\usepackage{amssymb}
\usepackage[usenames,dvipsnames]{xcolor}
\usepackage{tikz}
\usepackage{fixltx2e}
\usepackage{multirow}
\usepackage{anysize}
\usepackage{parskip}
\usepackage{url}
\usepackage[unicode=true,pdfusetitle,
 bookmarks=true,bookmarksnumbered=false,bookmarksopen=false,
 breaklinks=false,pdfborder={0 0 0},backref=false,colorlinks=true]
 {hyperref}
\marginsize{1.05in}{1.05in}{1.05in}{1.05in}

\hypersetup{
    colorlinks=true,       % false: boxed links; true: colored links
    linkcolor=Maroon,          % color of internal links (change box color with linkbordercolor)
    citecolor=OliveGreen,        % color of links to bibliography
    filecolor=magenta,      % color of file links
    urlcolor=Blue           % color of external links
}

\makeatletter

\providecommand{\tabularnewline}{\\}
\floatstyle{ruled}
\newfloat{algorithm}{tbp}{loa}
\providecommand{\algorithmname}{Algorithm}
\floatname{algorithm}{\protect\algorithmname}

  \theoremstyle{plain}
  \newtheorem{thm}{\protect\theoremname}
  \providecommand{\cnjname}{Conjecture}
  \theoremstyle{plain}
  \newtheorem{cnj}{\protect\cnjname}
  \theoremstyle{plain}
  \newtheorem{lem}{\protect\lemmaname}
  \theoremstyle{definition}
  \newtheorem{defn}{\protect\definitionname}
  \theoremstyle{plain}
  \newtheorem{cor}{\protect\corollaryname}
 \ifx\proof\undefined\
   \newenvironment{proof}[1][\proofname]{\par
     \normalfont\topsep6\p@\@plus6\p@\relax
     \trivlist
     \itemindent\parindent
     \item[\hskip\labelsep
           \scshape
       #1]\ignorespaces
   }{%
     \endtrivlist\@endpefalse
   }
   \providecommand{\proofname}{Proof}
 \fi
  \theoremstyle{plain}
  \newtheorem{prop}{\protect\propositionname}

\usepackage{algpseudocode}
\usepackage{hhline}
\usepackage{array,booktabs}

\renewcommand{\algorithmicrequire}{\textbf{Input:}}
\renewcommand{\algorithmicensure}{\textbf{Output:}}

\AtBeginDocument{
  
}

\makeatother

\providecommand{\corollaryname}{\inputencoding{latin9}Corollary}
\providecommand{\definitionname}{\inputencoding{latin9}Definition}
\providecommand{\lemmaname}{\inputencoding{latin9}Lemma}
\providecommand{\propositionname}{\inputencoding{latin9}Proposition}
\providecommand{\theoremname}{\inputencoding{latin9}Theorem}

\newcommand{\mathref}[2]{\hyperref[msym:#1]{#2}}

\begin{document}

\title{Fast and efficient exact synthesis of single qubit unitaries generated
by Clifford and T gates}

\author{\small{Vadym Kliuchnikov$^{1}$, Dmitri Maslov$^{2, 3}$ and Michele
Mosca$^{4,5}$} \\
{\small\it $^1$ Institute for Quantum Computing, and David R. Cheriton School of
Computer Science} \\
{\small\it University of Waterloo, Waterloo, Ontario, Canada} \\
{\small\it $^2$ National Science Foundation} \\
{\small\it Arlington, Virginia, USA} \\
{\small\it $^3$  Institute for Quantum Computing, and Dept. of Physics \&
Astronomy} \\
{\small\it University of Waterloo,  Waterloo, Ontario, Canada} \\
{\small\it $^4$  Institute for Quantum Computing, and Dept. of Combinatorics \&
Optimization} \\
{\small\it University of Waterloo,  Waterloo, Ontario, Canada} \\
{\small\it $^5$ Perimeter Insitute for Theoretical Physics} \\
{\small\it Waterloo, Ontario, Canada} \\
}

\maketitle

%% Macros
\global\long\def\w{\omega}
\global\long\def\Z{\mathbb{Z}}
\global\long\def\G{\mathcal{G}}
\global\long\def\Zr{\mathbb{Z}\left[\w\right]}
\global\long\def\N{\mathbb{N}}
\global\long\def\ox{\overline{x}}
\global\long\def\oy{\overline{y}}
\global\long\def\gde{\mathbf{\mathrm{gde}}}
\global\long\def\mt#1{\left(\mathrm{mod\,2^{#1}}\right)}
\global\long\def\m{\left(\mathrm{mod\,2}\right)}
\global\long\def\sde{\mathrm{sde}}
\global\long\def\re{\mathrm{Re}}
\global\long\def\Rot{\mathrm{R}}
\global\long\def\by{\times}
\global\long\def\abs#1{\left|#1\right|}
\global\long\def\R{\mathbb{Z}[\frac{1}{\sqrt{2}},i]}
\global\long\def\Zr{\mathbb{Z}\left[\w\right]}
\global\long\def\ip#1#2{\left\langle #1,#2\right\rangle }
\global\long\def\F#1#2{F\left( #1,#2\right) }
\global\long\def\Q#1{Q\left(#1\right) }
\global\long\def\P#1{P\left(#1\right) }
\global\long\def\bra#1{\left\langle #1\right|}
\global\long\def\ket#1{\left|#1\right\rangle }
\global\long\def\set#1#2{\left\{  \left.#1\,\right|\,#2\right\}  }

\newcommand\ST{\rule[-1em]{0pt}{3em}}
\global\long\def\tbl#1{\ST\underset{\,}{\overset{\,}{#1}}}

%% Theoremn environments
% \newtheorem{prop}{Proposition}
% \newtheorem{thm}{Theorem}
% \newtheorem{cnj}{Conjecture}
% \newtheorem{lem}{Lemma}
% \newtheorem{defn}{Definition}
% \newtheorem{cor}{Corollary}

%% Algorithm floats
\floatstyle{ruled}
\newfloat{algorithm}{tbp}{loa}
\providecommand{\algorithmname}{Algorithm}
\floatname{algorithm}{\protect\algorithmname}

\renewcommand{\algorithmicrequire}{\textbf{Input:}}
\renewcommand{\algorithmicensure}{\textbf{Output:}}

%% Page headers
\renewcommand{\leftmark}{\footnotesize\it{\quad Fast and efficient exact synthesis of single-qubit unitaries generated by Clifford and T gates}\hfill}
\renewcommand{\rightmark}{\hfill\footnotesize\it{Vadym Kliuchnikov, Dmitri Maslov, Michele Mosca\quad}}

\begin{abstract}
In this paper, we show the equivalence of the set of unitaries computable by the
circuits over the Clifford and T library and the set of unitaries over the ring
$\mathbb{Z}[\frac{1}{\sqrt{2}},i]$, in the single-qubit case.  We report an
efficient synthesis algorithm, with an exact optimality guarantee on the number of
Hadamard and T gates used.  We conjecture that the equivalence of the sets of
unitaries implementable by circuits over the Clifford and T library and
unitaries
over the ring $\mathbb{Z}[\frac{1}{\sqrt{2}},i]$ holds in the $n$-qubit case. 
\end{abstract}

\section{Introduction} 
The problem of efficient approximation of an arbitrary unitary using a finite
gate set is important in quantum computation.  In particular,
fault tolerance methods impose limitations on the set of elementary gates that
may be used on the logical (as opposed to physical) level. 
One of the most common of such sets consists of Clifford\footnote{Also known as stabilizer gates/library.  In the single-qubit case the Clifford library consists of, e.g., Hadamard and Phase gates.  In the multiple qubit case, the two-qubit CNOT gate is also included in the Clifford library.} ~and
T$:=\left(\begin{array}{cc}
1 & 0\\
0 & e^{i\pi/4}
\end{array}\right)$ gates.  This gate library is known to be approximately
universal
in the sense of the existence of an efficient approximation of the unitaries by
circuits over it.
In the single-qubit case, the standard solution to the problem of unitary
approximation by circuits over a gate library is given by
the Solovay-Kitaev algorithm \cite{Dawson2008}.  The multiple qubit case may be handled
via employing results from \cite{Barenco1995} that show how to decompose any $n$-qubit
unitary into a circuit with CNOT and single-qubit gates.  Given precision
$\varepsilon$,
the Solovay-Kitaev algorithm produces a sequence of gates of length
$O\left(\log^{c}\left(1/\varepsilon\right)\right)$
and requires time $O\left(\log^{d}\left(1/\varepsilon\right)\right)$, for positive 
constants $c$ and $d$. 

While the Solovay-Kitaev algorithm provides a provably efficient approximation,
it does not guarantee finding an exact decomposition of the unitary into a circuit 
if there is one, nor does it answer the question of whether an exact
implementation
exists.  We refer to these as the problems of {\em exact} synthesis.  
Studying the problems related to exact synthesis is the focus of our paper.  
In particular, we study the relation between single-qubit unitaries and circuits
composed with Clifford and
T gates.  We answer two main questions: first, given a unitary how to
efficiently decide if it can be synthesized exactly, and second, how to find an efficient gate sequence
that implements a given single-qubit unitary exactly (limited to the scenario
when such an implementation exists, which we know from answering the first of
the two questions).  We further provide some intuition about the multiple qubit case.

Our motivation for this study is rooted in the observation that the
implementations of quantum algorithms exhibit errors from multiple sources, including (1) algorithmic
errors resulting from the mathematical probability of measuring a correct answer being less
than one for many quantum algorithms \cite{bk:nc}, (2) errors due to
decoherence \cite{bk:nc}, (3) systematic errors and imperfections in controlling
apparatus (e.g., \cite{ar:clj}), and (4) errors arising from the inability to
implement a desired transformation exactly using the available finite gate set requiring one to resort to
approximations.  Minimizing the effect of errors has direct implications on the resources needed to implement an algorithm and sometimes determines the very ability to implement a quantum
algorithm and demonstrate it experimentally on available hardware of a specific size.  We set out to study the fourth
type of error, rule those out whenever possible, and identify situations when
such approximation errors cannot be avoided.  During the course of this study we have
also identified that we can prove certain tight and constructive upper bounds on the circuit size
for those unitaries that may be implemented exactly.  In particular, we report a single-qubit circuit 
synthesis algorithm that guarantees optimality of both Hadamard and T gate counts.

The remainder of the paper is organized as follows.  In the next section, 
we summarize and discuss our main results.  Follow up sections contain necessary
proofs. 
In Section~\ref{sec:2}, we reduce the problem of single-qubit unitary synthesis
to the problem of state preparation. 
In Section~\ref{sec:3}, we discuss two major technical Lemmas
required to prove our main result summarized in Theorem~\ref{thm:main}.  We also
present an algorithm for efficient decomposition of single-qubit unitaries
in terms of Hadamard, H$:=\frac{1}{\sqrt{2}}\left(\begin{array}{cc}
1 & 1\\
1 & -1
\end{array}\right)$, and T gates.  Section~\ref{sec:5} and \nameref{app:A} flesh out
formal proofs of minor technical results
used in Section~\ref{sec:4}. \nameref{app:B} contains a proof
showing that the
number of Hadamard and T gates in the circuits produced by Algorithm
\ref{alg:Decomposition-of-unitary} is minimal.

\section{Formulation and discussion of the results} \label{sec:2}
\noindent
One of our two main results reported in this paper is the following theorem:
\begin{thm} \label{thm:main}
The set of $2\by 2$ unitaries over the ring $\R$ is equivalent to the set of
those unitaries implementable exactly as single-qubit circuits constructed
using\footnote{ Note, that gate H may be replaced with all Clifford group gates
without change to the meaning, though may help to visually bridge this
formulation with 
the formulation of the follow-up general conjecture.} H and T gates only. 
\end{thm}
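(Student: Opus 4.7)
The theorem asserts a bi-implication, so there are two directions to establish. The forward direction, that every $H$/$T$ circuit computes a matrix in $U(2,\R)$, reduces to a routine closure argument: both $H$ and $T$ have entries in $\R$ (using $e^{i\pi/4}=(1+i)/\sqrt 2 = \w \in \R$), and $\R$ is closed under addition and multiplication, so any finite product of $H$'s and $T$'s lies in $M_2(\R)$ and is unitary by construction. The reverse direction, where a circuit must actually be produced from an arbitrary ring unitary, is the substantial one.

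The plan is to first apply the reduction of Section~\ref{sec:2} to turn the synthesis question for a unitary $U \in U(2,\R)$ into the state-preparation question of mapping the first column of $U$, viewed as a unit vector $\ket\psi = \alpha\ket 0 + \beta\ket 1$ with $\alpha,\beta \in \R$, onto $\ket 0$. Once this state-preparation problem is solved by some circuit $C$, the matrix $C U$ is diagonal, and since its diagonal entries must be eighth roots of unity (because both $C$ and $U$ have entries in $\R$ and only $\w^j$ appears as a unit norm ring element in the relevant coset), it can be trivialized by a short prefix of $T$'s and by $X = HT^4HT^4H$. So only the state-preparation step remains.

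For state preparation I would induct on the \emph{smallest denominator exponent} $\sde(\ket\psi) := \max(\sde(\alpha), \sde(\beta))$, where for $x \in \R$, $\sde(x)$ is the least integer $n \geq 0$ with $\sqrt{2}^{\,n} x \in \Zr$. Here $\Zr$ is the ring of integers of $\mathbb{Q}(\w)$, in which $2$ is totally ramified; let $\pi := 1-\w$ denote the unique prime above $2$, so $\sqrt 2$ is an associate of $\pi^{2}$. The inductive step amounts to showing that for any $\ket\psi$ with $\sde(\ket\psi)=n\geq 1$, some $k \in \{0,\ldots,7\}$ makes $\sde(HT^k \ket\psi) < n$. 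Since
\[
H T^k \ket \psi \;=\; \frac{\alpha + \w^{k}\beta}{\sqrt 2}\ket 0 \;+\; \frac{\alpha - \w^{k}\beta}{\sqrt 2}\ket 1,
\]
after clearing denominators this reduces to showing that for $a,b \in \Zr$ not both divisible by $\pi$, some $k$ makes $a + \w^{k} b$ divisible by $\pi^{2}$ (and then $a - \w^{k}b$ is automatically divisible by $\pi^{2}$ as well, since the two differ by $2\w^{k} b$ and $(2)\subseteq(\pi^{2})$).

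The main obstacle, and the technical heart of the argument, is precisely this residue analysis in $\Zr/\pi^{2}$. The normalization constraint $|\alpha|^2+|\beta|^2=1$, cleared of denominators, becomes $a\bar a + b \bar b$ being an associate of $\pi^{2n}$, and hence divisible by $\pi^{2}$ whenever $n\geq 1$; I would expect this divisibility to force the residues of $a$ and $b$ modulo $\pi^{2}$ into the narrow configurations for which the eight residues $\{\w^{k} : 0\leq k < 8\}$ suffice to realize the ratio $-a/b$ in the appropriate quotient, thereby yielding the required $k$. This is the lemma I expect to be one of the ``two major technical Lemmas'' referenced in Section~\ref{sec:3}. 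Granted this step, iterating $HT^k$-reduction drives the state to $\sde = 0$; the normalization then forces $(\alpha,\beta)$ to be one of the $16$ unit vectors of the form $(\w^j,0)$ or $(0,\w^j)$, each of which is dispatched by a short fixed suffix over $\{H,T\}$, completing the synthesis.
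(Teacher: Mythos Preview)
Your outline for the forward direction, the reduction to state preparation, and the base case $\sde=0$ all match the paper and are correct. The inductive step, however, contains a genuine gap.

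First, an arithmetic slip: since $\sqrt 2$ is an associate of $\pi^{2}$, reducing $\sde(\ket\psi)$ from $n$ to $n-1$ requires $(a+\w^{k}b)/(\sqrt 2)^{\,2}\in\Zr$, i.e.\ $\pi^{4}\mid a+\w^{k}b$ (equivalently $2\mid a+\w^{k}b$), not merely $\pi^{2}$. Divisibility by $\pi^{2}$ only prevents $\sde$ from \emph{increasing}.

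More seriously, even with the correct target $\pi^{4}$, the claim that a single $HT^{k}$ always strictly reduces $\sde$ of the entries is false. Take $\ket\psi=(HT)^{3}\ket 0$, whose numerators over $(\sqrt 2)^{2}$ are $x=1+\w^{2}-\w^{3}$ and $y=\w$ (Table~\ref{tab:First-5-elements}). One checks directly that for every $k\in\{0,\ldots,7\}$ the element $x+\w^{k}y=x+\w^{k+1}$ has at least one odd integer coordinate, hence is never divisible by $2$; thus $\sde$ remains equal to $2$ regardless of $k$. Your proposed residue analysis in $\Zr/\pi^{2}$ (a ring with only four elements) cannot rescue this: the obstruction is genuine, not an artifact of the modulus.

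The paper circumvents this by inducting on the strictly finer invariant $\sde(|z|^{2})$ rather than $\sde(z)$; note that $\sde(|z|^{2})\in\{2\,\sde(z)-1,\,2\,\sde(z)\}$, giving twice the resolution. Lemma~\ref{thm:Main} bounds the change of $\sde(|z|^{2})$ under $HT^{k}$ by $\pm 1$, and Lemma~\ref{thm:the-second} (established by an exhaustive computer check over integer-coordinate residues modulo $2^{3}$, not modulo $\pi^{2}$) shows that a decrease of exactly $1$ is always attainable once $\sde(|z|^{2})\ge 4$. In the counterexample above, $\sde(|z|^{2})=4$ while $\sde(z)=2$; one $HT^{k}$ step drops $\sde(|z|^{2})$ to $3$ with $\sde(z)$ unchanged, and a second step then drops $\sde(z)$ to $1$. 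Switching your induction to this finer invariant repairs the argument.
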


The inclusion of the set
of unitaries implementable exactly 
via circuits employing H and T gates into the set of $2\by 2$ unitaries over the ring $\R$ is straightforward, since, indeed, all four
elements of each of the unitary matrices 
H and T belong to the ring $\R$, and circuit composition is equivalent to matrix
multiplication in the unitary matrix 
formalism.  Since both operations used in the standard definition of matrix
multiplication, ``+'' and ``$\times$'', applied to the ring elements, clearly do
not take us outside the ring, each circuit constructed using H and T gates computes a matrix
whose elements belong to the ring $\R$.  The inverse inclusion is more
difficult to prove.  The proof is discussed in Sections \ref{sec:3}-\ref{sec:5} and
\nameref{app:A}.

We believe the statement of the Theorem \ref{thm:main} may be extended and
generalized into the following conjecture:
\begin{cnj} \label{cnj:main}
For $n>1$, the set of $2^n\by 2^n$ unitaries over the ring $\R$ is equivalent to
the set of unitaries implementable exactly as circuits with 
Clifford and T gates built using $(n+1)$ qubits, where the last qubit, an
ancillary qubit, is set to the value $\ket{0}$ prior to the circuit computation, and is
required to 
be returned in the state $\ket{0}$ at the end of it. 
\end{cnj}

Note, that the ancillary qubit may not be used if its use is not required.  
However, we next show that the requirement to include a single ancillary qubit
is essential---if removed, the statement of Conjecture \ref{cnj:main} would have
been false.  
The necessity of this condition is tantamount to the vast difference between
single-qubit case and $n$-qubit case for $n>1$.

We wish to illustrate the necessity of the single ancilla with
the use of controlled-T gate, defined as follows: 
\[
\left(\begin{array}{cccc}
1 & 0 & 0 & 0\\
0 & 1 & 0 & 0\\
0 & 0 & 1 & 0\\
0 & 0 & 0 & \w
\end{array}\right),
\]
\noindent where $\w:=e^{2\pi i/8}$, the eighth root of unity.  The determinant
of this unitary is $\w$.  However, any Clifford
gate as well as the T gate viewed as matrices over a set of two qubits have a
determinant that is a power of the imaginary number $i$.  Using
the multiplicative property of the determinant we conclude that the circuits
over the Clifford and T library may implement only those 
unitaries whose determinant is a power of the imaginary $i$.  As such,
the controlled-T, whose determinant equals $\w$, cannot be implemented 
as a circuit with Clifford and T gates built using only two qubits. It is also impossible
to implement the controlled-T up to global phase.  The reason is that the only complex numbers of the form $e^{i \phi}$ 
that belong to the ring $\R$ are $\w^{k}$ for integer $k$, as it is shown in \nameref{app:A}.  Therefore, global phase can only change determinant by 
a multiplicative factor of $\w^{4k}$. 
However, as
reported in \cite{Amy} and illustrated in Figure \ref{fig:c-t}, 
an implementation of the controlled-T over a set of three qubits, one of which
is set to and returned in the state $\ket{0}$, exists.  With the 
addition of an ancillary qubit, as described, the determinant argument fails,
because one would now need to look 
at the determinant of a subsystem, that, unlike the whole system, may be
manipulated in such a way as to allow the computation to happen.

\begin{figure}[t!]
\centerline{\includegraphics[scale=0.37]{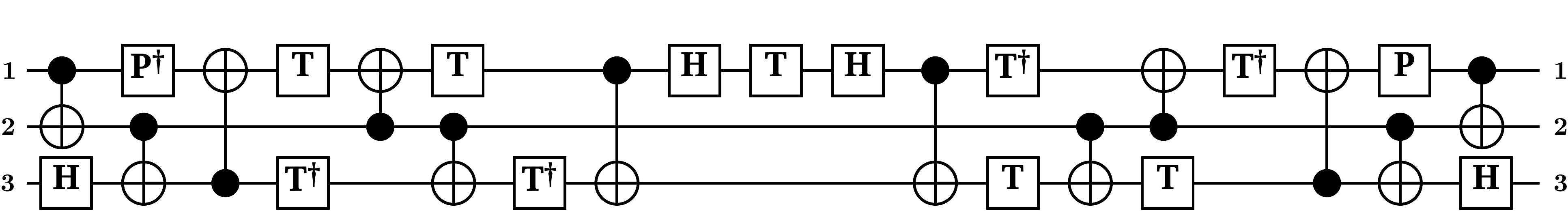}} %100 percent
\vspace*{13pt}
\caption{\label{fig:c-t} Circuit implementing the controlled-T gate, with upper
qubit being the control, middle qubit being the target, 
and bottom qubit being the ancilla. Reprinted from \cite{Amy}.}
\end{figure}

Theorem \ref{thm:main} provides an easy to verify criteria
that reliably differentiates between unitaries implementable in the H and T 
library and those requiring approximation.  As an example, $R_x(\frac{\pi}{3})$
and gates such as $R_z(\frac{\pi}{2^m})$, where $m>3$, popular in the
construction of 
circuits for the Quantum Fourier Transform (QFT), cannot be implemented exactly and must be approximated.  Thus,
the error in approximations may be an unavoidable feature for certain quantum
computations.  
Furthermore, Conjecture \ref{cnj:main}, whose one inclusion is trivial---all Clifford and T circuits compute unitaries over the ring
$\R$---implies that 
the QFT over more than three qubits may not be computed exactly as a circuit with
Clifford and T gates, and must be approximated.  

Our second major result is an algorithm (Algorithm \ref{alg:Decomposition-of-unitary}) that
synthesizes a quantum single-qubit circuit using gates H, Z$:=$T$^4$,
P$:=$T$^2$, and T in time $O\left(n_{opt}\right)$, where $n_{opt}$ is the
minimal number of gates required to implement a given unitary.  Technically, the
above complexity calculation assumes that the operations over the ring $\R$ take
a fixed finite amount of time.  In terms of bit operations, however, this time is 
quadratic in $n_{opt}$.  Nevertheless, assuming ring operations take constant time, the efficiency has a surprising
implication.  In particular, it is easy to show that our algorithm is
asymptotically optimal, in terms of both its speed and quality guarantees, among
all algorithms (whether known or not) solving the problem of synthesis in the
single-qubit case.  Indeed, a natural lower bound to accomplish the task of
synthesizing a unitary is $n_{opt}$---the minimal time it takes to simply write
down an optimal circuit assuming a certain algorithm somehow knows what it
actually is. Our 
algorithm features the upper bound of $O\left(n_{opt}\right)$ matching the lower
bound and implying asymptotic optimality.  To state the above somewhat
differently, the problem in approximating a unitary by a circuit is that of
finding an approximating unitary with elements in the ring $\R$, but not composing the 
circuit itself. We formally show H- and T-optimality of the circuits synthesized by Algorithm \ref{alg:Decomposition-of-unitary} in \nameref{app:B}.

The T-optimality of circuit decompositions has been a topic of study of the recent paper~\cite{Sv}. 
We note that our algorithm guarantees both T- and H-optimality, whereas 
the one reported in \cite{Sv} guarantees only T-optimality.  Furthermore, our implementation allows a trade-off between 
the number of Phase and Pauli-Z gates (the number of other gates used, being Pauli-X and Pauli-Y, does not exceed a total of three).  We shared our 
software implementation and circuits obtained from it to facilitate proper comparison of the two synthesis algorithms.

In the recent literature, similar topics have also been studied in~\cite{Amy}
who concentrated on finding depth-optimal multiple qubit quantum circuits in the
Clifford and T library, \cite{MA} who developed a normal form for single-qubit
quantum circuits using gates H, P, and T, and \cite{Dawson2008, F:PhD} who
considered improvements of the Solovay-Kitaev algorithm that are very relevant
to our work.  In fact, we employ the Solovay-Kitaev algorithm as a tool to find an
approximating unitary that we can then synthesize using our algorithm for exact
single-qubit unitary synthesis. 

\section{Reducing unitary implementation to state preparation} \label{sec:3}
In this section we discuss the connection between state preparation
and implementation of a unitary by a quantum circuit.  In the next section, we prove the following result: 
\begin{lem}
\label{thm:implementability}Any single-qubit state with entries in the ring
$\R$ can be prepared using only H and T gates given the initial
state $\ket 0$. 
\end{lem}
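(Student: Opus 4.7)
The plan is to proceed by induction on the \emph{smallest denominator exponent} $\sde(\psi) := \min\{k \geq 0 : \sqrt{2}^{k}\alpha, \sqrt{2}^{k}\beta \in \Zr\}$, where $\psi = \alpha\ket{0} + \beta\ket{1}$. Since $H^{-1} = H$ and $T^{-1} = T^{7}$, a circuit over $\{H, T\}$ prepares $\psi$ from $\ket{0}$ if and only if some sequence of such gates transforms $\psi$ into $\ket{0}$, so it suffices to produce a sequence of $H$ and $T^{-1}$ gates that drives $\sde(\psi)$ to zero.

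The base case $\sde(\psi) = 0$ reduces to solving $|x|^{2} + |y|^{2} = 1$ with $x, y \in \Zr$. Total positivity of the norm form on $\Zr$ forces $|x|^{2}, |y|^{2} \in \{0, 1\}$, so exactly one of them vanishes and the other is a unit $\w^{j}$; up to global phase the state is thus $\ket{0}$ or $\ket{1}$, both reachable from $\ket{0}$ (the latter as $HT^{4}H\ket{0}$).

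For the inductive step, write $\psi = (x\ket{0} + y\ket{1})/\sqrt{2}^{k}$ with $k = \sde(\psi) \geq 1$ and $|x|^{2} + |y|^{2} = 2^{k}$. The unique prime of $\Zr$ above $2$ is $\mathfrak{p} = (1-\w)$, with $(\sqrt{2}) = \mathfrak{p}^{2}$ and $\bar{\mathfrak{p}} = \mathfrak{p}$, so $\nu_{\mathfrak{p}}(|x|^{2}) = 2\nu_{\mathfrak{p}}(x)$; the norm identity then forces \emph{both} $x$ and $y$ to be units modulo $\sqrt{2}$ once $k \geq 1$. Since $\Zr/(\sqrt{2})$ has only two unit residues, one can choose $j$ so that $x \equiv \w^{-j}y \pmod{\sqrt{2}}$, and applying $HT^{-j}$ replaces $(x, y)$ by $(x + \w^{-j}y,\, x - \w^{-j}y)$ at denominator $\sqrt{2}^{k+1}$, with both new numerators divisible by $\sqrt{2}$, so $\sde$ does not increase.

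The main obstacle is that a single $HT^{-j}$ step may strip off only one factor of $\sqrt{2}$ and leave $\sde$ unchanged; to force strict descent, the analysis must be refined from $\Zr/(\sqrt{2})$ to $\Zr/(2)$, whose unit group has order $8$ and splits under multiplication by $\w$ into two orbits. When the updated $x, y$ share an orbit, an appropriate $HT^{-j}$ divides \emph{both} new numerators by $2$ and $\sde$ drops by one; when they land in different orbits, a case check, controlled by the constraint $|x|^{2} + |y|^{2} \equiv 0 \pmod{4}$ for $k \geq 2$, shows that a bounded number of further $HT^{-j}$ layers drives them into a same-orbit configuration. This yields a uniform $O(1)$ bound on the number of $H$ gates per unit decrease of $\sde$, and hence an $H$--$T$ circuit of length linear in $\sde(\psi)$.
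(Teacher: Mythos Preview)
Your overall architecture matches the paper's: descend on a denominator exponent by applying $HT^{-j}$ for a well-chosen $j$, and handle the bottom by a finite check. The paper uses the finer measure $\sde(|z|^2)$ (roughly twice your $\sde(\psi)$) and proves, via a computer-verified case analysis over all residues of the integer coordinates of $x,y$ modulo $2^3$, that some $j\in\{0,1,2,3\}$ \emph{always} decreases $\sde(|z|^2)$ by exactly one (Lemma~\ref{thm:the-second}, checked by Algorithm~\ref{alg:Verification-of-second}). Your same-orbit/different-orbit dichotomy in $(\Zr/2)^{\ast}$ is morally the parity $\gde(|x|^2)\in\{0,1\}$ that the paper tracks.

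The genuine gap is your final paragraph. In the different-orbit case you assert that ``a case check, controlled by the constraint $|x|^{2}+|y|^{2}\equiv 0\pmod 4$'' shows a bounded number of further $HT^{-j}$ layers reaches a same-orbit configuration, but you do not perform that check, and the modular level you propose is too coarse to carry it. After one $HT^{-j}$ step the new numerators are $(x\pm\w^{-j}y)/\sqrt 2$; to know their residues in $\Zr/(2)$ you need $x\pm\w^{-j}y$ modulo $2\sqrt 2=(1-\w)^{6}$, i.e.\ information about $x,y$ strictly finer than their classes in $\Zr/(2)$. So the residues mod $2$ of the updated pair are \emph{not} determined by those of the old pair together with the norm condition mod $4$, and your ``case check'' cannot be completed at that level. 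This is precisely why the paper works modulo $8$ on the integer coordinates and relegates the verification to a computer search. Either carry out the analysis at the finer level (as the paper does), or switch to the paper's measure $\sde(|z|^2)$ and prove it drops by one each step; at present the termination of your descent is asserted rather than established.

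A minor point: your base case is correct, but ``total positivity of the norm form'' is not the right justification, since $|x|^2=P(x)+\sqrt 2\,Q(x)$ need not be totally positive in $\Z[\sqrt 2]$. What you actually use is that $P(x)=x_0^2+x_1^2+x_2^2+x_3^2$ is a positive-definite integer form, so $P(x)+P(y)=1$ forces one of $x,y$ to vanish and the other to be $\pm\w^{j}$.
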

We first establish why Lemma \ref{thm:implementability} implies that any single-qubit unitary
with entries in the ring $\R$ can be implemented exactly using H and T gates.

Observe that any single-qubit unitary can be
written in the form
\[
\left(\begin{array}{cc}
z & -w^{*}e^{i\phi}\\
w & z^{*}e^{i\phi}
\end{array}\right),
\]
where ${}^{*}$ denotes the complex conjugate.  The determinant of
the unitary is equal to $e^{i\phi}$ and belongs to the ring $\R$ when
all entries of the unitary belong to the ring $\R$.  It turns out that the
only elements in the ring with the absolute value of $1$ are $\w^{k}$
for integer $k$.  We postpone the proof; it follows from techniques
developed in \nameref{app:A} and discussed at the end of the
appendix.  For now, we conclude that the most general form of a unitary with entries in
the ring is: 
\[
\left(\begin{array}{cc}
z & -w^{*}\w^{k}\\
w & z^{*}\w^{k}
\end{array}\right).
\]

We next show how to find a circuit that implements any such unitary
when we know a circuit that prepares its first column given the 
state $\ket 0$.  Suppose we have a circuit that prepares state
$\left(\begin{array}{c}
z\\
w
\end{array}\right)$. This means that the first column of a unitary corresponding
to the circuit is $\left(\begin{array}{c}
z\\
w
\end{array}\right)$ and there exists an integer $k'$ such that the unitary is equal
to: 
\[
\left(\begin{array}{cc}
z & -w^{*}\w^{k'}\\
w & z^{*}\w^{k'}
\end{array}\right).
\]
We can synthesize all possible unitaries with the first column $\left(z,w\right)^{t}$
by multiplying the unitary above by a power of T from the right:
\[
\left(\begin{array}{cc}
z & -w^{*}\w^{k'}\\
w & z^{*}\w^{k'}
\end{array}\right)T^{k-k'}=\left(\begin{array}{cc}
z & -w^{*}\w^{k}\\
w & z^{*}\w^{k}
\end{array}\right).
\]
This also shows that given a circuit for state preparation of length $n$
we can always find a circuit for unitary implementation of length $n+O(1)$
and vice versa. 

\section{\label{sec:Sequence-for-state}Sequence for state preparation }
\label{sec:4}
\noindent
We start with an example that illustrates the main ideas needed to
prove Lemma \ref{thm:implementability}.  Next we formulate two
results, Lemma \ref{thm:Main} and Lemma \ref{thm:the-second}, that the proof of Lemma \ref{thm:implementability} is based on. 
Afterwards, we describe the algorithm for decomposition of a
unitary with entries in the ring $\R$ into a sequence of H and T
gates.  Finally, we prove Lemma \ref{thm:Main}.  The proof of Lemma \ref{thm:the-second}
is more involved and it is shown in Section \ref{sec:Bilinear-forms-and}.

Let us consider a sequence of states $\left(HT\right)^{n}\ket 0$.
It is an infinite sequence, since in the Bloch sphere picture unitary
$HT$ corresponds to rotation over an angle that is an irrational
fraction of $\pi$.  Table \ref{tab:First-5-elements} shows the first
four elements of the sequence. 

\begin{table}
\caption{\label{tab:First-5-elements}First four elements of sequence
$\left(HT\right)^{n}\ket 0$}
\centerline{\footnotesize
\begin{tabular}{l|l|l}
\hline 
$n$ & $\tbl{\left(HT\right)^{n}\ket 0=\left(\begin{array}{c}
z_{n}^{\,}\\
w_{n}^{\,}
\end{array}\right)}$ & $\tbl{\left(\begin{array}{c}
\left|z_{n}\right|{}^{2}\\
\left|w_{n}\right|{}^{2}
\end{array}\right)}$\tabularnewline
\hline 
\hline 
$1$ & $\tbl{\frac{1}{\sqrt{2}}\left(\begin{array}{c}
1\\
1
\end{array}\right)}$ &
$\tbl{\frac{1}{\left(\sqrt{2}\right)^{2}}\left(\begin{array}{c}
1\\
1
\end{array}\right)}$\tabularnewline
\hline 
$2$ & $\tbl{\frac{1}{\left(\sqrt{2}\right)^{2}}\left(\begin{array}{c}
\omega+1\\
1-\w
\end{array}\right)}$ &
$\tbl{\frac{1}{\left(\sqrt{2}\right)^{3}}\left(\begin{array}{c}
\sqrt{2}+1\\
\sqrt{2}-1
\end{array}\right)}$\tabularnewline
\hline 
$3$ & $\tbl{\frac{1}{\left(\sqrt{2}\right)^{2}}\left(\begin{array}{c}
\omega^{2}-\omega^{3}+1\\
\omega^{\,}
\end{array}\right)}$ &
$\tbl{\frac{1}{\left(\sqrt{2}\right)^{4}}\left(\begin{array}{c}
3\\
1
\end{array}\right)}$\tabularnewline
\hline 
$4$ & $\tbl{\frac{1}{\left(\sqrt{2}\right)^{3}}\left(\begin{array}{c}
2\omega^{2}-\omega^{3}+1\\
1-\omega^{3}
\end{array}\right)}$ &
$\tbl{\frac{1}{\left(\sqrt{2}\right)^{5}}\left(\begin{array}{c}
3\sqrt{2}-1\\
\sqrt{2}+1
\end{array}\right)}$\tabularnewline
\hline 
\end{tabular}}

% \caption{\label{tab:First-5-elements}First four elements of sequence
% $\left(HT\right)^{n}\ket 0$}
\end{table}

There are two features in this example that are important. First
is that the power of $\sqrt{2}$ in the denominator of the entries is
the same. We prove that the power of the denominator is the same
in the general case of a unit vector with entries in ring $\R$. The second
feature is that the power of $\sqrt{2}$ in the denominator of
$\left|z_{n}\right|{}^{2}$
increases by $1$ after multiplication by $HT$. We show that
in general, under additional assumptions, multiplication by
$H\left(T^{k}\right)$
cannot change the power of $\sqrt{2}$ in the denominator by more than $1$. Importantly, under the
same additional assumptions it is always possible to find such an integer
$k$ that the power increases or decreases by $1$. 

We need to clarify what we mean by power of $\sqrt{2}$ in the denominator,
because, for example, it is possible to write $\frac{1}{\sqrt{2}}$
as $\frac{\w-\w^{3}}{2}$.  As such, it may seem that the power of $\sqrt{2}$
in the denominator of a number from the ring $\R$ is not well defined.
To address this issue we consider the subring 
\[
\Zr := \left\{ a+b\w+c\w^{2}+d\w^{3}, a,b,c,d \in \Z \right\}
\]
of ring $\R$ and the smallest denominator exponent. These definitions are also crucial
for our proofs. 

It is natural to extend the notion of divisibility to elements of $\Zr$:
$x$ divides $y$ when there exists $x'$ from the ring $\Zr$
such that $xx'=y$. Using the divisibility relation we can introduce
the smallest denominator exponent and greatest dividing exponent.
\begin{defn}
The {\em smallest denominator exponent}, $\sde\left(z,x\right)$, of base
$x\in\Zr$ with respect to $z\in\R$ is the smallest integer value $k$ such that $zx^{k}\in\Zr$.
If there is no such $k$, the smallest denominator exponent is infinite. 
\end{defn}
For example, $\sde(1/4,\sqrt{2})=4$ and 
% $\sde\left(\frac{1}{5},\sqrt{2}\right)=\infty$ and
% 1/5 is outside $\R$.
$\sde\left(2\sqrt{2},\sqrt{2}\right)=-3$.
The smallest denominator exponent of base $\sqrt{2}$ is finite
for all elements of the ring $\R$.  The greatest dividing exponent is closely
connected to $\sde$. 
\begin{defn}
The {\em greatest dividing exponent}, $\gde\left(z,x\right)$, of base $x\in\Zr$
with respect to $z\in\Zr$ is the integer value $k$ such that
$x^{k}$ divides $z$ and $x$ does not divide quotient $\frac{z}{x^k}$.
If no such $k$ exists, the greatest dividing exponent is
said to be infinite. 
\end{defn}
For example, $\gde\left(z,\w^{n}\right)=\infty$, since $\w^{n}$
divides any element of $\Zr$, and $\gde\left(0,x\right)=\infty$.
For any non-zero base $x\in\Zr$, $\gde$ and $\sde$ are related via a simple formula:
\begin{equation}
\sde\left(\frac{z}{x^{k}},x\right)=k-\gde\left(z,x\right)\label{eq:gdesde}.
\end{equation}
This follows from the definitions of $\sde$ and $\gde$. First, the
assumption $\gde\left(z,x\right)=k_{0}$ implies
$\sde\left(\frac{z}{x^{k}},x\right)\ge k-k_{0}$.
Second, the assumption $\sde\left(\frac{z}{x^{k}},x\right)=k_{0}$
implies $\gde\left(z,x\right)\ge k+k_{0}$.  Since both inequalities need to be satisfied 
simultaneously, this implies the equality. 

We are now ready to introduce
two results that describe the change of the $\sde$ as a result of the application
$H\left(T\right)^{k}$ to a state: 

\[
HT^{k}\left(\begin{array}{c}
z\\
w
\end{array}\right)=\left(\begin{array}{c}
\frac{z+w\w^{k}}{\sqrt{2}}\\
\frac{z-w\w^{k}}{\sqrt{2}}
\end{array}\right).
\]

\begin{lem}
\label{thm:Main}Let $\left(\begin{array}{c}
z\\
w
\end{array}\right)$ be a state with entries in $\R$ and let $\sde\left(\abs
z^{2}\right)\ge 4$.
Then, for any integer $k$: 
\begin{equation}
-1\le\sde\left(\abs{\frac{z+w\w^{k}}{\sqrt{2}}}^{2}\right)-\sde\left(\abs
z^{2}\right)\le 1\label{eq:main}.
\end{equation}

\end{lem}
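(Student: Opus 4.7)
My plan is to work directly with the identity
$$\abs{\tfrac{z+w\w^{k}}{\sqrt{2}}}^{2} \;=\; \tfrac{\abs{z}^{2}+\abs{w}^{2}}{2} + \re(z^{*}w\w^{k}) \;=\; \tfrac{1}{2} + \re(z^{*}w\w^{k}),$$
valid because $(z,w)^{t}$ is a unit vector, and to translate the lemma into a bound on the $\sqrt{2}$-divisibility of a single element of $\Zr$. Setting $n:=\sde(\abs{z}^{2})\ge 4$, one immediately has $\sde(\abs{w}^{2})=n$ as well, because $\abs{w}^{2}=1-\abs{z}^{2}$ and $\sde(1)=0$. Choosing a common $m$ so that $a:=z(\sqrt{2})^{m}$ and $b:=w(\sqrt{2})^{m}$ both lie in $\Zr$, the unit-norm condition reads $aa^{*}+bb^{*}=2^{m}$ and
$$\abs{\tfrac{z+w\w^{k}}{\sqrt{2}}}^{2} \;=\; \frac{N_{k}}{2^{m+1}},\qquad N_{k}:=(a+b\w^{k})(a^{*}+b^{*}\w^{-k}).$$
Two applications of~(\ref{eq:gdesde}) reduce the lemma to showing $\gde(N_{k},\sqrt{2})-\gde(aa^{*},\sqrt{2})\in\{1,2,3\}$ for every $k\in\Z$.

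It is convenient to refine $\gde(\cdot,\sqrt{2})$ via the $(1-\w)$-adic valuation $v$ on $\Zr$, because $(\sqrt{2})=(1-\w)^{2}$ as ideals and $1-\w$ is self-conjugate, so $v(xx^{*})=2v(x)$. Then $\gde(aa^{*},\sqrt{2})=v(a)$ and $\gde(N_{k},\sqrt{2})=v(a+b\w^{k})$. The identity $aa^{*}+bb^{*}=2^{m}$ combined with $n\ge 4$ forces $v(a)=v(b)=:\alpha$ (otherwise the minimum valuation would collapse $\sde(\abs{z}^{2})$ to $0$). Writing $a=(1-\w)^{\alpha}a_{1}$ and $b=(1-\w)^{\alpha}b_{1}$ with $v(a_{1})=v(b_{1})=0$, the lemma becomes the claim $v(a_{1}+b_{1}\w^{k})\in\{1,2,3\}$ uniformly in $k$.

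The lower bound $v(a_{1}+b_{1}\w^{k})\ge 1$ is immediate: reducing mod $1-\w$ in the residue field $\Zr/(1-\w)\cong\mathbb{F}_{2}$ sends $a_{1}, b_{1}, \w^{k}$ all to $1$, so $a_{1}+b_{1}\w^{k}\equiv 2\equiv 0$. The upper bound $v(a_{1}+b_{1}\w^{k})\le 3$ is the main obstacle; here the hypothesis $n\ge 4$ is decisive. Arguing by contradiction, if $a_{1}\equiv -b_{1}\w^{k}\pmod{(1-\w)^{4}}$, write $a_{1}=-b_{1}\w^{k}+(1-\w)^{4}c$ and expand
$$a_{1}a_{1}^{*}+b_{1}b_{1}^{*}\;=\;2\,b_{1}b_{1}^{*}+(1-\w)^{4}D-(1-\w)^{8}cc^{*},\qquad D:=b_{1}\w^{k}c^{*}-cb_{1}^{*}\w^{-k}.$$
Using $2=(1-\w)^{4}u$ for some unit $u\in\Zr^{\times}$ and noting that the left side equals $2^{m}/(1-\w)^{2\alpha}$, whose $v$-valuation is $2n\ge 8$, dividing both sides by $(1-\w)^{4}$ and reducing modulo $1-\w$ yields $u\,b_{1}b_{1}^{*}+D\equiv 0\pmod{1-\w}$. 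But mod $1-\w$ one has $u\,b_{1}b_{1}^{*}\equiv 1$ whereas $D\equiv c^{*}-c\equiv 0$, a contradiction.
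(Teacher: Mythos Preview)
Your proof is correct and takes a genuinely different route from the paper's.

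The paper works entirely with $\gde(\cdot,\sqrt{2})$ on \emph{squared absolute values}. It passes to integral representatives $x=z\w^{k}(\sqrt{2})^{m}$, $y=w(\sqrt{2})^{m}$, proves an auxiliary inequality (Lemma~\ref{lem:ineq}) bounding $\gde(|x+y|^{2})$ in terms of $\gde(|x|^{2})$, $\gde(|y|^{2})$ whenever $|x|^{2}+|y|^{2}$ is a power of $\sqrt{2}$, and then applies this lemma twice: once to $(x,y)$ for the lower bound and once to the pair $(x+y,x-y)$ for the upper bound. The technical engine behind Lemma~\ref{lem:ineq} is the estimate~(\ref{eq:gdestr}) on $\gde(\re(\sqrt{2}xy^{*}))$, which in turn is established in \nameref{app:A} by expanding everything in the integer coordinates of $\Zr$ via the explicit quadratic forms $P$ and $Q$ and parity arguments modulo $2$.

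You instead refine $\gde(\cdot,\sqrt{2})$ to the $(1-\w)$-adic valuation $v$ on $\Zr$, exploiting $(\sqrt{2})=(1-\w)^{2}$ and the conjugation-invariance of the prime $(1-\w)$ to get $\gde(|x|^{2},\sqrt{2})=v(x)$. This collapses the problem to bounding $v(a_{1}+b_{1}\w^{k})$ for units $a_{1},b_{1}$, after which the lower bound is a one-line reduction to $\Zr/(1-\w)\cong\mathbb{F}_{2}$, and the upper bound is a short contradiction obtained by expanding the unit-norm identity modulo $(1-\w)^{5}$. Your computation of $a_{1}a_{1}^{*}$ is correct (note $((1-\w)^{4})^{*}=-(1-\w)^{4}$, which is why the $(1-\w)^{8}cc^{*}$ term carries a minus sign), and the reduction $D\equiv c^{*}-c\equiv 0$ holds because conjugation fixes the residue field. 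Incidentally, your argument only needs $2n\ge 5$, i.e.\ $n\ge 3$, so you actually prove a hair more than stated.

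What each approach buys: yours is shorter and more conceptual, replacing the coordinate bilinear-form machinery by standard valuation theory in the ring of integers of $\mathbb{Q}(\w)$. The paper's approach is more elementary and entirely self-contained, and its apparatus---the forms $P$, $Q$ and the residue classes of coordinates modulo $8$---is set up precisely so that it can be reused in the computer-assisted proof of Lemma~\ref{thm:the-second}, which must distinguish the individual values $1,2,3$ of the $\sde$-increment rather than merely bound it.
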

The next lemma states that for almost all unit vectors the difference
in (\ref{eq:main}) achieves all possible values, when the power of $\w$
is chosen appropriately. 
\begin{lem}
\label{thm:the-second}Let $\left(\begin{array}{c}
z\\
w
\end{array}\right)$ be a state with entries in $\R$ and let $\sde\left(\abs
z^{2}\right)\ge 4$.
Then, for each number $s \in \{-1,0,1\}$ there exists an integer $k \in\left\{ 0,1,2,3\right\}$
such that: 
\[
\sde\left(\abs{\frac{z+w\w^{k}}{\sqrt{2}}}^{2}\right)-\sde\left(\abs
z^{2}\right)=s.
\]
\end{lem}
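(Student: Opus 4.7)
The plan is to reduce the question to a parity analysis of four integer coefficients. Since $\abs z^{2} + \abs w^{2} = 1$ with $\sde(\abs z^{2}) = n \ge 4$, a short check---writing $\abs z^{2} = (p + q\sqrt{2})/\sqrt{2}^{\,n}$ with $p$ odd and noting that $\abs w^{2}$ inherits the same denominator---gives $\sde(\abs w^{2}) = n$. Setting $u := z^{*}w$ one has $\abs u^{2} = \abs z^{2}\abs w^{2}$, whence $\sde(\abs u^{2}) = 2n$, and a quick computation using the total ramification of the prime above $2$ in $\Zr$ yields $\sde(u) = n$. Writing $u = u_{0}/\sqrt{2}^{\,n}$ with $u_{0} = a + b\w + c\w^{2} + d\w^{3} \in \Zr$ and $\sqrt{2} \nmid u_{0}$, a direct expansion gives
\[
\abs{\frac{z + w\w^{k}}{\sqrt{2}}}^{2} = \tfrac{1}{2} + \re\bigl(u\w^{k}\bigr) = \frac{N_{k}}{\sqrt{2}^{\,n+2}}, \qquad N_{k} := \sqrt{2}^{\,n} + u_{0}\w^{k} + u_{0}^{*}\w^{-k},
\]
so the target $\sde$-jump equals $2 - \gde(N_{k}, \sqrt{2})$, and the task is to show $\gde(N_{k}, \sqrt{2})$ realises each value in $\{1, 2, 3\}$ as $k$ ranges over $\{0, 1, 2, 3\}$. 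The upper bound $\le 3$ is already handed to us by Lemma~\ref{thm:Main}.

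I would next compute each $N_{k}$ explicitly. Using $\w^{4} = -1$ and the identity $v + v^{*} = 2A + (B - D)\sqrt{2}$ for $v = A + B\w + C\w^{2} + D\w^{3}$, each $N_{k}$ takes the form $2\alpha_{k} + \beta_{k}\sqrt{2}$ with $\alpha_{k}, \beta_{k} \in \Z$ linear in $a, b, c, d, 2^{\lfloor n/2 \rfloor}$. In particular $\sqrt{2} \mid N_{k}$ automatically (so $\gde \ge 1$), and the standard divisibility rule in $\Z[\sqrt{2}]$ yields $\gde(N_{k}, \sqrt{2}) = 1$ iff $\beta_{k}$ is odd, $= 2$ iff $\beta_{k}$ is even and $\alpha_{k}$ is odd, and $\ge 3$ iff both $\alpha_{k}, \beta_{k}$ are even. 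Since $n \ge 4$ makes $2^{\lfloor n/2 \rfloor}$ even and hence irrelevant to the parities, a brief tabulation yields
\[
\alpha_{k} \equiv a,\ d,\ c,\ b \pmod 2, \qquad \beta_{k} \equiv Y,\ X,\ Y,\ X \pmod 2 \qquad (k = 0, 1, 2, 3),
\]
where $X := (a + c) \bmod 2$ and $Y := (b + d) \bmod 2$.

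The crucial input is a norm-induced parity constraint on $(a, b, c, d)$ that rules out the degenerate case. Direct multiplication gives $u_{0}\,u_{0}^{*} = (a^{2} + b^{2} + c^{2} + d^{2}) + [a(b - d) + c(b + d)]\sqrt{2}$, while $u_{0}\,u_{0}^{*} = 2^{n}\abs z^{2}\abs w^{2}$ has odd $\Z$-part (equal to $pp' + 2qq'$ with both $p, p'$ odd, arising from the expressions for $\abs z^{2}$ and $\abs w^{2}$ above). Hence $a^{2} + b^{2} + c^{2} + d^{2}$ is odd, so $a + b + c + d \equiv 1$, i.e.\ $X + Y \equiv 1 \pmod 2$, meaning that exactly one of $X, Y$ is odd. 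If $X$ is odd and $Y$ is even, then $\beta_{1}, \beta_{3}$ are odd, giving $\gde(N_{1}) = \gde(N_{3}) = 1$, while $\beta_{0}, \beta_{2}$ are even so $\gde(N_{0}), \gde(N_{2}) \in \{2, 3\}$; with $\alpha_{0} \equiv a$ and $\alpha_{2} \equiv c$ and $a + c$ odd, exactly one of $\alpha_{0}, \alpha_{2}$ is odd, producing $\gde = 2$ for that index and $\gde \ge 3$ for the other, pinned to $\gde = 3$ by Lemma~\ref{thm:Main}. The symmetric case $Y$ odd is analogous. The main obstacle is precisely this norm/parity bookkeeping: without the unit-vector hypothesis forcing $X + Y \equiv 1$, the case $X \equiv Y \equiv 1$ would allow all four $\beta_{k}$ to be odd, only $\gde = 1$ would be achievable, and the lemma would fail.
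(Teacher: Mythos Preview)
Your argument is correct and genuinely different from the paper's. The paper reformulates the statement in terms of the pair $x=z(\sqrt{2})^{\sde(z)}$, $y=w(\sqrt{2})^{\sde(w)}\in\Zr$ (eight integer parameters), observes that the conditions $\gde(|x+\w^{k}y|^{2})=s+j$ and the constraints defining the admissible pairs depend only on the integer coordinates of $x,y$ modulo $2^{3}$, and then dispatches the resulting finite problem by an exhaustive computer search (Algorithm~\ref{alg:Verification-of-second}). Your route instead passes to the single cross term $u=z^{*}w$, reduces the eight parameters to the four integer coordinates $a,b,c,d$ of $u_{0}=u\,\sqrt{2}^{\,n}$, and pushes the whole analysis down to parities. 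The decisive point---that $P(u_{0})=a^{2}+b^{2}+c^{2}+d^{2}$ is odd, hence exactly one of $X=a+c$, $Y=b+d$ is odd---you extract from $\sde(|u|^{2})=2n$ (equivalently $\gde(|u_{0}|^{2})=0$), and this is precisely what eliminates the degenerate parity configuration and forces $\{1,2,3\}$ to be hit. The appeal to Lemma~\ref{thm:Main} to pin $\gde(N_{k})\le 3$ is legitimate and mirrors the paper's logical ordering.

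What each approach buys: the paper's mod-$8$ enumeration is mechanical and easy to certify by running code, but gives no structural explanation; your mod-$2$ argument is short, fully human-checkable, and isolates the reason the lemma holds (the unit-vector hypothesis enters exactly through the oddness of $P(u_{0})$). A minor stylistic remark: your derivation that $P(u_{0})$ is odd via $pp'+2qq'$ is fine, but once you have $\sde(|u|^{2})=2n$ and $\sde(u)=n$ the conclusion $\gde(|u_{0}|^{2})=0$ is immediate from the relation $\sde(|u|^{2})=2\,\sde(u)-\gde(|u_{0}|^{2})$, which is perhaps cleaner.
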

These lemmas are essential for showing how to find a sequence of gates that prepares
a state with entries in the ring $\R$ given the initial state $\ket 0$.
Now we sketch a proof of Lemma \ref{thm:implementability}. Later,
in Lemma \ref{lem:sde}, we show that for arbitrary $u$ and $v$ from the
ring $\R$ the equality $|u|^{2}+|v|^{2}=1$ implies $\sde(|u|^{2})=\sde(|v|^{2})$,
when $\sde\left(\abs u^{2}\right)\ge 1$ and $\sde\left(\abs v^{2}\right)\ge 1$.
Therefore, under assumptions of Lemma \ref{thm:Main}, we may consider
$\sde$ of a single entry in any given state.  Lemma \ref{thm:the-second} implies
that we can prepare any state using H and T gates if we can prepare
any state $\left(\begin{array}{c}
z\\
w
\end{array}\right)$ such that $\sde(|z|^{2})\le 3$.  The set of states with
$\sde(|z|^{2})\le 3$ is finite and small.  Therefore, we can exhaustively verify that all
such states can be prepared using H and T gates given the initial
state $\ket 0$.  In fact, we performed such verification using a breadth first
search algorithm. 

The statement of Lemma \ref{thm:the-second} remains true if we replace the set
$\left\{ 0,1,2,3\right\} $ by $\left\{ 0,-1,-2,-3\right\}$.  Lemma \ref{thm:the-second}
results in Algorithm \ref{alg:Decomposition-of-unitary}
for decomposition of a unitary matrix with entries in the ring $\R$ into
a sequence of H and T gates.  Its complexity is
$O\left(\sde(|z|^{2})\right)$,
where $z$ is an entry of the unitary. The idea behind the algorithm
is as follows: given a $2\by 2$ unitary $U$ over the ring $\R$ and $\sde\ge 4$,
there is a value of $k$ in $\left\{ 0,1,2,3\right\} $ such that the
multiplication by $H\left(T^{k}\right)$ reduces the $\sde$ by
$1$. Thus, after $n-4$ steps, we have expressed 
\[
U=HT^{k_{1}}H\ldots HT^{k_{n-4}}U',
\]
where any entry $z'$ of $U'$ has the property $\sde\left(\abs{z'}^{2}\right)<4$.
The number of such unitaries is small enough to handle the decomposition
of $U'$ via employing a breadth-first search algorithm.

We use $n_{opt}(U)$ to define the smallest length of the circuit that implements $U$. 

\begin{cor}
Algorithm \ref{alg:Decomposition-of-unitary} produces circuit of length $O(n_{opt}(U))$ 
and uses $O(n_{opt}(U))$ arithmetic operations.  The number of bit operations it uses is $O(n_{opt}^2(U))$. \end{cor}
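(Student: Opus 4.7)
I will bound each of the three quantities in the corollary: circuit length, number of arithmetic operations, and number of bit operations. The central quantity that connects all three to $n_{opt}(U)$ is $\sde(\abs{z}^{2})$, where $z$ is, say, the top-left entry of $U$.

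First, I would establish the key bridge $\sde(\abs{z}^{2}) = O(n_{opt}(U))$. Fix an optimal $H,T$-circuit of length $n_{opt}(U)$ for $U$ and apply its gates one at a time to $\ket{0}$, tracking $\sde$ of the top entry of the running state. Each $T$ gate is diagonal and therefore leaves $\abs{z}^{2}$ and $\abs{w}^{2}$ unchanged, so $\sde$ is unaffected. For each $H$ gate, the upper bound in Lemma \ref{thm:Main} with $k=0$ (combined with Lemma \ref{lem:sde} to identify $\sde(\abs{z}^{2})$ with $\sde(\abs{w}^{2})$) says $\sde$ grows by at most $1$, provided $\sde \geq 4$; below that threshold $\sde$ is trivially $O(1)$. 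Summing over all gates yields $\sde(\abs{z}^{2}) \leq n_{opt}(U) + O(1)$.

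Next, I would count the main-loop iterations of Algorithm \ref{alg:Decomposition-of-unitary}. By construction it repeatedly picks the $k \in \{0,1,2,3\}$ guaranteed by Lemma \ref{thm:the-second} to decrease $\sde$ by exactly $1$ and appends the corresponding $HT^{k}$ to the running circuit. After at most $\sde(\abs{z}^{2}) - 3$ such reductions the state falls into the finite precomputed set with $\sde \leq 3$, which is handled in $O(1)$ time via breadth-first lookup. Combining with the previous paragraph, the number of iterations is $O(n_{opt}(U))$, and since each contributes one $H$ and at most three $T$ gates, the final circuit has length $O(n_{opt}(U))$.

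For arithmetic and bit cost: each iteration performs only a constant number of additions and multiplications in $\Zr$ to evaluate the four candidate $\sde$ values and extract the successful one, giving $O(n_{opt}(U))$ ring operations in total. The running state is always a unit vector in $\R^{2}$ with common denominator $(\sqrt{2})^{\sde}$ and numerators in $\Zr$; the unit-norm condition forces each of the four integer coefficients inside a numerator to have magnitude $O(2^{\sde/2})$, hence $O(\sde) = O(n_{opt}(U))$ bits. Schoolbook arithmetic on such ring elements costs $O(n_{opt}(U))$ bit operations per ring operation, compounding to $O(n_{opt}^{2}(U))$ bit operations overall.

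The hard part is the bridge $\sde(\abs{z}^{2}) = O(n_{opt}(U))$: without it the circuit-length estimate is the wrong shape (one would a priori only get $O(\sde(\abs{z}^2))$, exponential in $n_{opt}$ in the worst case), and the arithmetic and bit counts follow it mechanically. This bridge rests essentially on the upper bound in Lemma \ref{thm:Main}.
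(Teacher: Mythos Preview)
Your argument is essentially the paper's: both directions of the bridge $\sde(\abs z^{2})=\Theta(n_{opt}(U))$ are obtained exactly as you describe (the upper bound by tracking an optimal circuit through Lemma~\ref{thm:Main}, the lower bound from the algorithm's guaranteed $\sde$-decrease via Lemma~\ref{thm:the-second}), and the operation counts then follow from the loop length.

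One small slip in the bit-count paragraph: you assert that schoolbook arithmetic on $O(n_{opt})$-bit ring elements costs $O(n_{opt})$ bit operations \emph{per ring operation}, but schoolbook multiplication of such elements costs $O(n_{opt}^{2})$, which would push your total to $O(n_{opt}^{3})$. The paper avoids this by noting that each iteration needs only \emph{additions} to update $U$; the test $\sde(\abs{z'}^{2})=s-1$ requires no full product, since by Proposition~\ref{prop:gdeeee} it depends only on residues of $\P{\cdot}$ and $\Q{\cdot}$ modulo a fixed power of $2$, hence only on the low-order bits of the coefficients. With that refinement (no genuine multiplications on full-length integers), your $O(n_{opt}^{2})$ bit bound is correct.
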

\begin{proof}
Lemma \ref{lem:sde}, proved later in this section,
implies that the value of $\sde\left(\abs{\cdot}^{2}\right)$ is the same for all
entries of $U$ when the $\sde$ of at least one entry is greater
than $0$. For such unitaries we define
$\sde^{\abs{\cdot}^{2}}\left(U\right)=\sde\left(\abs{z'}^{2}\right)$,
where $z'$ is an entry of $U$.  The remaining special case is unitaries
of the form 
\[
\left(\begin{array}{cc}
0 & \w^{k}\\
\w^{j} & 0
\end{array}\right),\left(\begin{array}{cc}
\w^{k} & 0\\
0 & \w^{j}
\end{array}\right).
\]
We define $\sde^{\abs{\cdot}^{2}}$ to be $0$ for all of them. Consider
a set $S_{opt,3}$ of optimal H and T circuits for unitaries with
$\sde^{\abs{\cdot}^{2}}\le 3$.
This is a finite set and therefore we can define $N_{3}$ to be
the maximal number of gates in a circuit from $S_{opt,3}$.  If we have a circuit
that is optimal and its length is greater than $N_{3}$, the corresponding
unitary must have $\sde^{\abs{\cdot}^{2}}\ge 4$. Consider now a unitary
$U$ with an optimal circuit of length $n_g\left(U\right)$ that is larger
than $N_{3}$.  As it is optimal, all its subsequences are optimal
and it does not include H${}^2$.  Let $N_{H,3}$ be the maximum of the number
of Hadamard gates used by the circuits in $S_{opt,3}$. An optimal circuit for $U$
includes at most $\left\lfloor \frac{n_g\left(U\right)-N_{3}}{2}\right\rfloor
+N_{H,3}$
Hadamard gates and, by Lemma \ref{thm:Main}, $\sde^{\abs{\cdot}^{2}}$
of the resulting unitary is less or equal to $N_{H,3}+3+\left\lfloor
\frac{n_g\left(U\right)-N_{3}}{2}\right\rfloor $.
We conclude that for all unitaries except a finite set: 
\[
\sde^{\abs{\cdot}^{2}}\left(U\right)\le N_{H,3}+3+\left\lfloor
\frac{n_g\left(U\right)-N_{3}}{2}\right\rfloor.
\]
From the other side, the decomposition algorithm we described gives
us the bound
\[
n_g\left(U\right)\le N_{3}+4\cdot\sde^{\abs{\cdot}^{2}}\left(U\right).
\]
We conclude that $n_g\left(U\right)$ and
$\sde^{\abs{\cdot}^{2}}\left(U\right)$
are asymptotically equivalent. Therefore the algorithm's runtime is
$O\left(n_g\left(U\right)\right)$,
because the algorithm performs $\sde^{\abs{\cdot}^{2}}\left(U\right)-4$
steps.
    
We note that to store $U$ we need $O\left(\sde^{\abs{\cdot}^{2}}\left(U\right)\right)$ bits 
and therefore the addition on each step of the algorithm requires  $O\left(\sde^{\abs{\cdot}^{2}}\left(U\right)\right)$ bit 
operations. Therefore we use $O(n_{opt}^2(U))$ bit operations in total.
\end{proof}

This proof illustrates the technique that we use in \nameref{app:B} to
find a tighter connection between $\sde$ and the circuit implementation cost, in particular we prove that circuits produced by the algorithm are H- and T-optimal.

\begin{algorithm}[H]
\begin{algorithmic}
\Require  Unitary $U=\left(\begin{array}{cc}
z_{00} & z_{01}\\
z_{10} & z_{11}
\end{array}\right)$ with entries in the ring $\R$. \\
$\mathbb{S}_3$ -- table of all unitaries over the ring $\R$,
such that $sde$ of their entries is less than or equal to 3.
\Ensure Sequence $S_{out}$ of H and T gates that implements $U$.

\State $S_{out}\leftarrow Empty$
\State $s\leftarrow \sde(|z_{00}|^{2})$

\While{s$>$3}
\State state$\leftarrow$unfound
\ForAll{$k\in\{0,1,2,3\}$}
	\While{ state = unfound }
	\State $z'_{00}\leftarrow$ top left entry of $HT^{-k}U$
		\If{$sde\left(|z_{00}'|^{2}\right)=s-1$}
			\State state = found 
			\State add $T^{k}H$ to the end of $S_{out}$
			\State $s\leftarrow \sde\left(|z_{00}'|^{2}\right)$
			\State $U\leftarrow HT^{-k}U$
		\EndIf
	\EndWhile
\EndFor
\EndWhile
\State lookup sequence $S_{rem}$ for $U$ in $\mathbb{S}_3$
\State add $S_{rem}$ to the end of $S_{out}$ \\
\Return $S_{out}$
\end{algorithmic} 

\caption{\label{alg:Decomposition-of-unitary}Decomposition of a unitary matrix
with entries in the ring $\R$.}
\end{algorithm}

We next prove Lemma \ref{thm:Main}.  In Section \ref{sec:Bilinear-forms-and}
we use Lemma \ref{thm:Main} to show that we can prove Lemma \ref{thm:the-second}
by considering a large, but finite, number of different cases. We
provide an algorithm (Algorithm \ref{alg:Verification-of-second}) that verifies all these cases.

We now proceed to the proof of Lemma \ref{thm:Main}. We use equation
(\ref{eq:gdesde}) connecting $\sde$ and $\gde$ together with the following 
properties of $\gde$. For any base $x\in\Zr$: 
\begin{align}
\gde\left(y+y',x\right)\ge\min\left(\gde\left(y,x\right),\gde\left(y',
x\right)\right)\quad\label{eq:splitting}\\
\gde\left(yx^{k},x\right)=k+\gde\left(y,x\right)\quad & \mbox{(base
extraction)}\label{eq:base-extr}\\
\gde\left(y,x\right)<\gde\left(y',x\right)\Rightarrow\gde\left(y+y',
x\right)=\gde\left(y,x\right)\quad & \mbox{(absorption)}\label{eq:absorption}.
\end{align}
It is also helpful to note that $\gde\left(y,x\right)$ is invariant
with respect to multiplication by $\w$ and complex conjugation of
both $x$ and $y$. 

All these properties follow directly from the definition of $\gde$;
the first three are briefly discussed in \nameref{app:A}. The
condition
$\gde\left(y,x\right)<\gde\left(y',x\right)$ is necessary for the
third property. For example,
$\gde\left(\sqrt{2}+\sqrt{2},\sqrt{2}\right)\ne\gde\left(\sqrt{2},\sqrt{2}
\right)$. 

There are also important properties specific to base $\sqrt{2}$.
We use shorthand $\gde\left(\cdot\right)$ for
$\gde\left(\cdot,\sqrt{2}\right)$:
\begin{alignat}{1}
 & \gde\left(x\right)=\gde\left(\abs x^{2},2\right)\label{eq:gde2s2}\\
 & 0\le\gde\left(\abs x^{2}\right)-2\gde\left(x\right)\le 1\label{eq:gde2s2in}\\
 & \gde\left(Re\left(\sqrt{2}xy^{*}\right)\right)\ge\left\lfloor
\frac{1}{2}\left(\gde\left(\abs x^{2}\right)+\gde\left(\abs
y^{2}\right)\right)\right\rfloor \label{eq:gdestr}\\
 & \gde\left(\abs x^{2}\right)=\gde\left(\abs
y^{2}\right)\Rightarrow\gde\left(x\right)=\gde\left(y\right)\label{eq:reduction}.
\end{alignat}

Proofs of these properties are not difficult but tedious; furthermore, for completeness they are included
in \nameref{app:A}. We exemplify them here. In the second property,
inequality (\ref{eq:gde2s2in}), when $x=\w$ the left inequality becomes
equality and for $\w+1$ the right one does.  When we substitute $x=\w,y=\w+1$
in the second to last property, inequality (\ref{eq:gdestr}), it turns into $0=\left\lfloor
\frac{1}{2}\right\rfloor $,
so the floor function $r\mapsto\left\lfloor r\right\rfloor $
is necessary. For the third property it is important that
$\re\left(\sqrt{2}xy^{*}\right)$
is an element of $\Zr$ when $x,y$ itself belongs to the ring $\Zr$.
In contrast, $\re\left(xy^{*}\right)$ is not always an element of $\Zr$,
in particular, when $x=\w,y=\w+1$. In general,
$\gde\left(x\right)=\gde\left(y\right)$
does not imply $\gde\left(\abs x^{2}\right)=\gde\left(\abs y^{2}\right)$.
For instance, $\gde\left(\w+1\right)=\gde\left(\w\right)$, but
$\abs{\w+1}^{2}=2+\sqrt{2}$
and $\abs{\w}^{2}=1$.

In the proof of Lemma \ref{thm:Main} we use
$x=z\left(\sqrt{2}\right)^{\sde\left(z\right)},y=w\left(\sqrt{2}\right)^{
\sde\left(w\right)}$
that are elements of $\Zr$. The next lemma shows an additional
property that such $x$ and $y$ have.
\begin{lem}
\label{lem:sde}Let $z$ and $w$ be elements of the ring $\R$ such that
$\abs z^{2}+\abs w^{2}=1$ and $\sde\left(z\right)\ge 1$ or
$\sde\left(w\right)\ge 1$,
then $\sde\left(z\right)=\sde\left(w\right)$ and for elements  $x=z\left(\sqrt{2}\right)^{\sde\left(z\right)}$ and 
$y=w\left(\sqrt{2}\right)^{\sde\left(w\right)}$ of the ring $\Zr$
it holds that $\gde\left(\abs x^{2}\right)=\gde\left(\abs
y^{2}\right)\le 1$.\end{lem}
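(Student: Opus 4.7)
The plan is to reduce the statement to arithmetic with $\gde(\cdot,\sqrt{2})$ via the bridge formula (\ref{eq:gdesde}). First I would observe that minimality in the definition of $\sde$ forces $\gde(x,\sqrt{2})=\gde(y,\sqrt{2})=0$: indeed, if $\sqrt{2}$ divided $x=z(\sqrt{2})^{\sde(z)}$, then writing $x=\sqrt{2}\,x'$ with $x'\in\Zr$ would yield $z(\sqrt{2})^{\sde(z)-1}=x'\in\Zr$, contradicting the minimality of $\sde(z)$; the same argument applies to $y$. Property (\ref{eq:gde2s2in}) then pins down $\gde(|x|^2),\gde(|y|^2)\in\{0,1\}$, which is already half of the desired conclusion—provided I can first establish $\sde(z)=\sde(w)$.

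Next I would clear denominators in $|z|^2+|w|^2=1$. With $s:=\max(\sde(z),\sde(w))$, this rewrites as the identity in $\Zr$
\[
|x|^2\cdot 2^{s-\sde(z)}+|y|^2\cdot 2^{s-\sde(w)}=2^{s}.
\]
Assume without loss of generality that $\sde(z)\ge \sde(w)$; then the hypothesis guarantees $\sde(z)\ge 1$, so $s=\sde(z)\ge 1$ and $\gde(2^{s},\sqrt{2})=2s\ge 2$.

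To force $\sde(z)=\sde(w)$, suppose $\sde(z)>\sde(w)$. Using base extraction (\ref{eq:base-extr}), the second summand has $\gde$ equal to $\gde(|y|^2)+2(\sde(z)-\sde(w))\ge 2$. Rewriting $|x|^2=2^{s}-|y|^2\cdot 2^{s-\sde(w)}$ and applying the splitting property (\ref{eq:splitting}) then gives $\gde(|x|^2,\sqrt{2})\ge 2$, which contradicts $\gde(|x|^2)\le 1$. Hence $\sde(z)=\sde(w)$, and the identity collapses to $|x|^2+|y|^2=2^{s}$ with $s\ge 1$.

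Finally, for the equality $\gde(|x|^2)=\gde(|y|^2)$: if these differed, absorption (\ref{eq:absorption}) would give $\gde(|x|^2+|y|^2)=\min(\gde(|x|^2),\gde(|y|^2))\le 1$, whereas the left side equals $\gde(2^{s})=2s\ge 2$, a contradiction. Thus the two $\gde$'s agree, and we already know they lie in $\{0,1\}$, giving the bound $\le 1$. There is no real obstacle here; the only subtlety I expect to be careful about is the strict inequality hypothesis in absorption (\ref{eq:absorption}), which is exactly what rules out the mismatched case, and the bookkeeping in the step that translates minimality of $\sde$ into $\gde(x,\sqrt{2})=0$.
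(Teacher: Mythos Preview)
Your proposal is correct and follows essentially the same approach as the paper: both arguments clear denominators in $|z|^2+|w|^2=1$, use minimality of $\sde$ together with (\ref{eq:gdesde}) to get $\gde(x)=\gde(y)=0$, invoke (\ref{eq:gde2s2in}) to bound $\gde(|x|^2),\gde(|y|^2)\le 1$, and then use the absorption/splitting properties to force both $\sde(z)=\sde(w)$ and $\gde(|x|^2)=\gde(|y|^2)$. The only cosmetic difference is that the paper derives the single equation $\gde(|y|^2)+2(\sde(z)-\sde(w))=\gde(|x|^2)$ via one application of absorption and reads both conclusions off it, whereas you split this into two separate contradiction arguments; the content is the same.
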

\begin{proof}
Without loss of generality, suppose $\sde\left(z\right)\ge\sde\left(w\right)$.
Using the relation in equation (\ref{eq:gdesde}) between $\sde$
and $\gde$, expressing $z$ and $w$ in terms of $x$ and $y$, and substituting
the result into equation $\abs z^{2}+\abs w^{2}=1$, we obtain 
\[
\abs
y^{2}\left(\sqrt{2}\right)^{2\left(\sde\left(z\right)-\sde\left(w\right)\right)}
=\left(\sqrt{2}\right)^{2\sde\left(z\right)}-\abs x^{2}.
\]
Substituting $z=x/\left(\sqrt{2}\right)^{\sde\left(z\right)}$ into
formula (\ref{eq:gdesde}) relating $\sde$ and $\gde$, we obtain
$\gde\left(x\right)=0$, and using one of the inequalities (\ref{eq:gde2s2in})
connecting $\gde\left(\abs x^{2}\right)$ and $\gde\left(x\right)$
we conclude that $\gde\left(\abs x^{2}\right)\le 1$.  Similarly,
$\gde\left(\abs y^{2}\right)\le 1$. We use the absorption property
(\ref{eq:absorption}) of $\gde\left(\cdot\right)$ to write: 
\[
\gde\left(\abs
y^{2}\left(\sqrt{2}\right)^{2\left(\sde\left(z\right)-\sde\left(w\right)\right)}
\right)=\gde\left(\abs x^{2}\right).
\]
Equivalently, using the base extraction property (\ref{eq:base-extr}):
\[
\gde\left(\abs
y^{2}\right)+2\left(\sde\left(z\right)-\sde\left(w\right)\right)=\gde\left(\abs
x^{2}\right).
\]
Taking into account $\gde\left(\abs x^{2}\right)\le 1$ and $\gde\left(\abs
y^{2}\right)\le 1$,
it follows that $\sde\left(z\right)=\sde\left(w\right)$.
\end{proof}

In the proof of Lemma \ref{thm:Main} we turn inequality (\ref{eq:main})
for difference of $\sde$ into an inequality for difference of $\gde\left(\abs
x^{2}\right)$
and $\gde\left(\abs{x+y}^{2}\right)$.  The following lemma shows a basic relation
between these quantities that we will use.
\begin{lem}
\label{lem:ineq}If $x$ and $y$ are elements of the ring $\Zr$ such that $\abs x^{2}+\abs
y^{2}=\left(\sqrt{2}\right)^{m}$,
then 
\[
\gde\left(\abs{x+y}^{2}\right)\ge\min\left(m,1+\left\lfloor
\frac{1}{2}\left(\gde\left(\abs x^{2}\right)+\gde\left(\abs
y^{2}\right)\right)\right\rfloor \right).
\]
\end{lem}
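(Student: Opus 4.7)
The plan is to reduce the statement to a direct application of the listed $\gde$ properties for base $\sqrt{2}$. First I would expand
$$\abs{x+y}^2 = \abs{x}^2 + \abs{y}^2 + 2\,\re(xy^*)$$
and substitute the hypothesis $\abs{x}^2 + \abs{y}^2 = (\sqrt{2})^m$ to obtain
$$\abs{x+y}^2 = (\sqrt{2})^m + \sqrt{2}\cdot\re(\sqrt{2}\,xy^*),$$
using that $2\,\re(xy^*) = \sqrt{2}\cdot\re(\sqrt{2}\,xy^*)$ since $\sqrt{2}$ is real. This rewriting is the heart of the argument because it turns the off-diagonal cross term into the quantity $\re(\sqrt{2}\,xy^*)$, which lies in $\Zr$ (as emphasized in the discussion following inequality (\ref{eq:gdestr})) and is exactly what that inequality bounds.

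Next I would apply the base extraction property (\ref{eq:base-extr}) to the second summand and invoke (\ref{eq:gdestr}):
$$\gde\bigl(\sqrt{2}\cdot\re(\sqrt{2}\,xy^*)\bigr) = 1 + \gde\bigl(\re(\sqrt{2}\,xy^*)\bigr) \ge 1 + \left\lfloor \frac{1}{2}\bigl(\gde(\abs{x}^2) + \gde(\abs{y}^2)\bigr)\right\rfloor.$$
Since $\gde\bigl((\sqrt{2})^m\bigr) = m$ (by base extraction applied to $1\cdot(\sqrt{2})^m$), the splitting property (\ref{eq:splitting}) applied to the two-term sum $\abs{x+y}^2 = (\sqrt{2})^m + \sqrt{2}\cdot\re(\sqrt{2}\,xy^*)$ immediately yields
$$\gde\bigl(\abs{x+y}^2\bigr) \ge \min\left(m,\; 1 + \left\lfloor \frac{1}{2}\bigl(\gde(\abs{x}^2)+\gde(\abs{y}^2)\bigr)\right\rfloor\right),$$
which is the claimed bound.

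The argument is essentially a direct computation once the cross term is rewritten with a factor of $\sqrt{2}$ pulled out. The only real subtlety, and so the main thing to get right, is recognizing that this $\sqrt{2}$ must be \emph{absorbed inside} $\re(\cdot)$ to ensure the argument lies in $\Zr$ so that (\ref{eq:gdestr}) is applicable; writing $\re(xy^*)$ alone will in general leave $\Zr$ (as the paper explicitly notes with the example $x=\w$, $y=\w+1$), and then $\gde$ of that quantity is not directly controlled. No case analysis or further machinery is needed beyond the four properties of $\gde$ already cataloged.
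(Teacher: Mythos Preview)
Your proof is correct and follows essentially the same route as the paper: expand $\abs{x+y}^2$, rewrite the cross term as $\sqrt{2}\,\re(\sqrt{2}\,xy^*)$ so that the argument of $\re$ lies in $\Zr$, then apply the splitting property (\ref{eq:splitting}), base extraction (\ref{eq:base-extr}), and the bound (\ref{eq:gdestr}) in that order. Your emphasis on why the factor $\sqrt{2}$ must be absorbed inside $\re(\cdot)$ is exactly the point the paper highlights as well.
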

\begin{proof}
The first step is to expand $\abs{x+y}^{2}$ as $\abs x^{2}+\abs
y^{2}+\sqrt{2}Re\left(\sqrt{2}xy^{\ast}\right)$.
Next, we apply inequality (\ref{eq:splitting}) to the $\gde$ of the sum,
and then the base extraction property (\ref{eq:base-extr}) of the $\gde$.
We use equality $\gde\left(\abs x^{2}+\abs y^{2}\right)=m$ to conclude that
\[
\gde\left(\abs{x+y}^{2}\right)\ge\min\left(m,1+\gde\left(Re\left(\sqrt{2}xy^{\ast}
\right)\right)\right).
\]
Finally, we use inequality (\ref{eq:gdestr}) for
$\gde\left(Re\left(\sqrt{2}xy^{\ast}\right)\right)$
to derive the statement of the lemma.
\end{proof}

Now we collected all tools required to prove Lemma \ref{thm:Main}.

\begin{proof}
%[Proof of Lemma \ref{thm:Main}]
Recall that, we are proving that for elements $z$ and $w$ of the ring $\R$ and any
integer $k$ it is true that: 
\[
-1\le\sde\left(\abs{\frac{z+w\w^{k}}{\sqrt{2}}}^{2}\right)-\sde\left(\abs
z^{2}\right)\le 1,\mbox{ when }\sde\left(\abs z^{2}\right)\ge 4.
\]
Using Lemma \ref{lem:sde} we can define
$m=\sde\left(z\right)=\sde\left(w\w^{k}\right)$
and $x=\w^{k}z\left(\sqrt{2}\right)^{m}$ and
$y=w\left(\sqrt{2}\right)^{m}$.
Using the relation (\ref{eq:gdesde}) between $\gde$ and $\sde$, and
the base extraction property (\ref{eq:base-extr}) of $\gde$ we rewrite
the inequality we are trying to prove as: 
\[
1\le\gde\left(\abs{x+y}^{2}\right)-\gde\left(\abs x^{2}\right)\le 3.
\]
It follows from Lemma \ref{lem:sde} that $\gde\left(\abs
x^{2}\right)=\gde\left(\abs y^{2}\right)\le 1$.
Taking into account $\abs x^{2}+\abs y^{2}=\sqrt{2}^{2m}$ and applying
the inequality proved in Lemma \ref{lem:ineq} to $x$ and $y$ we conclude
that: 
\[
\gde\left(\abs{x+y}^{2}\right)\ge\min\left(2m,1+\gde\left(\abs
x^{2}\right)\right).
\]
The condition $m\ge 4$ allows us to remove taking the minimum on the right hand side and replace it with $1+\gde\left(\abs x^{2}\right)$. 
This proves one of the two inequalities we are trying to show, $1\le\gde\left(\abs{x+y}^{2}\right)-\gde\left(\abs x^{2}\right)$.
To prove the second inequality, $\gde\left(\abs{x+y}^{2}\right)-\gde\left(\abs
x^{2}\right)\le 3$,
we apply Lemma \ref{lem:ineq} to the pair of elements of the ring $\Zr$, $x+y$ and $x-y$.  The conditions of the
lemma are satisfied because $\abs{x+y}^2+\abs{x-y}^2=\sqrt{2}^{2\left(m+1\right)}$.
Therefore: 
\[
\gde\left(4\abs x^{2}\right)\ge\min\left(2\left(m+1\right),1+\left\lfloor
\frac{1}{2}\left(\gde\left(\abs{x+y}^{2}\right)+\gde\left(\abs{x-y}^{2}
\right)\right)\right\rfloor \right).
\]
Using the base extraction property (\ref{eq:base-extr}), we notice
that $\gde\left(4\abs x^{2}\right)=4+\gde\left(\abs x^{2}\right)$.
It follows from $m\ge 4$ that $2\left(m+1\right)\ge 4+\gde\left(\abs
x^{2}\right)$.
As such, we can again remove the minimization and simplify the inequality
to: 
\[
3+\gde\left(\abs x^{2}\right)\ge\left\lfloor
\frac{1}{2}\left(\gde\left(\abs{x+y}^{2}\right)+\gde\left(\abs{x-y}^{2}
\right)\right)\right\rfloor .
\]

To finish the proof it suffices to show that
$\gde\left(\abs{x+y}^{2}\right)=\gde\left(\abs{x-y}^{2}\right)$.
We establish an upper bound for $\gde\left(\abs{x+y}^{2}\right)$
and use the absorption property (\ref{eq:absorption}) of $\gde$.
Using non-negativity of $\gde$ and the definition of the floor function
we get: 
\[
2\left(3+\gde\left(\abs x^{2}\right)\right)+1\ge\gde\left(\abs{x+y}^{2}\right).
\]
Since $\gde\left(\abs x^2\right) \leq 1$, $\gde\left(\abs{x+y}^{2}\right)\le 9$.  Observing that $2\left(m+1\right)>9$
we confirm that 
\[
\gde\left(\abs{x-y}^{2}\right)=\gde\left(\sqrt{2}^{2\left(m+1\right)}-\abs{x+y}^
{2}\right)=\gde\left(\abs{x+y}^{2}\right).
\]
\end{proof}

To prove Lemma \ref{thm:the-second} it suffices to show that
$\gde(\abs{x+\w^{k}y}^{2})-\gde(\abs x^{2})$ achieves all values
in the set $\left\{ 1,2,3\right\} $ as $k$ varies over all values
in the range from $0$ to $3$.  We can split this into two cases:
$\gde(\abs x^{2})=1$ and $\gde(\abs x^{2})=0$. We need to check
if $\gde(\abs{x+\w^{k}y}^{2})$ belongs to $\left\{ 1,2,3\right\} $
or $\left\{ 2,3,4\right\}$.  Therefore, it is important to describe
these conditions in terms of $x$ and $y$.  This is accomplished in the next Section.

\section{\label{sec:Bilinear-forms-and}Quadratic forms and greatest dividing
exponent} \label{sec:5}
\noindent
We first clarify why it is enough to check a finite number
of cases to prove Lemma \ref{thm:the-second}.  Recall
how the lemma can be restated in terms of the elements of the ring $\Zr$.
Next we illustrate why we can achieve a finite number of cases with a simple
example using integer numbers $\Z$. Then we show how this idea can be extended
to the elements of the ring $\Zr$ that are real (that is, with imaginary part equal to zero).  Finally, in the proof
of Lemma \ref{thm:the-second}, we identify a set of cases that we
need to check and provide an algorithm to perform it. 

As discussed at the end of the previous Section, to prove Lemma
\ref{thm:the-second}
one can consider elements $x$ and $y$ of the ring $\Zr$ such that $\abs x^{2}+\abs
y^{2}=2^{m}$
for $m\ge 4$.  We know from Lemma \ref{thm:Main} that there are three possibilities
in each of the two cases: 
\begin{itemize}
\item when $\gde(\abs x^{2})=0$, $\gde(\abs{x+\w^{k}y}^{2})$ equals to
$1,2\mbox{, or }3$,
\item when $\gde(\abs x^{2})=1$, $\gde(\abs{x+\w^{k}y}^{2})$ equals $2,3\mbox{, 
or }4$.
\end{itemize}
We want to show that each of these possibilities is achievable for a specific choice of 
$k\in\left\{ 0,1,2,3\right\}$. 

We illustrate the idea of the reduction to a finite number of cases
with an example.  Suppose we want to describe two classes of integer numbers: 
\begin{itemize}
\item integer $a$ such that the $\gde\left(a^{2},2\right)=2$,
\item integer $a$ such that the $\gde\left(a^{2},2\right)>2$.
\end{itemize}
It is enough to know $a^{2}\, mod\,2^{3}$ to decide which class $a$
belongs to. Therefore we can consider $8$ residues $a\, mod\,2^{3}$
and find the classes to which they belong. We extend this
idea to the real elements of the ring $\Zr$, being elements of the ring $\Zr$
that are equal to their own real part.  Afterwards we apply the result
to $\abs{x+\w^{k}y}^{2}$, that is a real element of $\Zr$. 

We note that the real elements of $\Zr$ are of the form $a+\sqrt{2}b$
where $a$ and $b$ are themselves integer numbers.  An important preliminary observation, that
follows from the irrationality of $\sqrt{2}$, is that for any integer number
$c$ 
\begin{equation}
\gde\left(c\right)=2\gde\left(c,2\right).\label{eq:int}
\end{equation}
The next proposition gives a condition equivalent to
$\gde\left(a+\sqrt{2}b\right)=k$,
expressed in terms of $\gde\left(a,2\right)$ and $\gde\left(b,2\right)$: 
\begin{prop}
\label{prop:gdeeee}Let $a$ and \textbf{$b$} be integer numbers. There
are two possibilities:
\begin{itemize}
\item $\gde\left(a+\sqrt{2}b\right)$ is even if and only if
$\gde\left(b,2\right)\ge\gde\left(a,2\right)$;
in this case, $\gde\left(a,2\right)=\gde\left(a+\sqrt{2}b\right)/2$. 
\item $\gde\left(a+\sqrt{2}b\right)$ is odd if and only if
$\gde\left(b,2\right)<\gde\left(a,2\right)$;
in this case,
$\gde\left(b,2\right)=\left(\gde\left(a+\sqrt{2}b\right)-1\right)/2$.
\end{itemize}
\end{prop}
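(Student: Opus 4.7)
The plan is a direct case analysis based on which of $\gde(a,2)$ and $\gde(b,2)$ is smaller, together with the base extraction and absorption properties of $\gde$. First I would write $a = 2^{\alpha}a'$ and $b = 2^{\beta}b'$ with $a',b' \in \Z$ odd, handling the degenerate cases $a=0$ or $b=0$ separately via the convention $\gde(0,\cdot)=\infty$. Applying equation (\ref{eq:int}) to the integers $a'$ and $b'$ gives $\gde(a',\sqrt{2}) = 2\gde(a',2) = 0$ and likewise $\gde(b',\sqrt{2}) = 0$, so $\sqrt{2}$ fails to divide either of $a'$ or $b'$ in $\Zr$. This is the only concrete algebraic input I actually need.

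In the case $\beta \ge \alpha$, I would factor
\[
a + \sqrt{2}\,b \;=\; \bigl(\sqrt{2}\bigr)^{2\alpha}\Bigl(a' + \bigl(\sqrt{2}\bigr)^{2(\beta-\alpha)+1} b'\Bigr),
\]
observe that inside the parentheses the first summand has $\gde = 0$ while the second has $\gde \ge 1$, and invoke absorption (\ref{eq:absorption}) to conclude the parenthesized quantity has $\gde = 0$; base extraction (\ref{eq:base-extr}) then yields $\gde(a+\sqrt{2}b) = 2\alpha$, which is even and satisfies $\gde(a,2) = \alpha = \gde(a+\sqrt{2}b)/2$. In the case $\beta < \alpha$, I would instead factor
\[
a + \sqrt{2}\,b \;=\; \bigl(\sqrt{2}\bigr)^{2\beta+1}\Bigl(b' + \bigl(\sqrt{2}\bigr)^{2(\alpha-\beta)-1} a'\Bigr),
\]
noting that $\alpha-\beta \ge 1$ forces the inner exponent of $\sqrt{2}$ to be at least $1$, so the same absorption argument gives $\gde(a+\sqrt{2}b) = 2\beta+1$, which is odd and satisfies $\gde(b,2) = \beta = (\gde(a+\sqrt{2}b)-1)/2$.

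Because the two sub-cases $\beta \ge \alpha$ and $\beta < \alpha$ partition all possibilities and produce $\gde$'s of opposite parity, both ``if and only if'' directions of the proposition follow at once: the parity of $\gde(a+\sqrt{2}b)$ unambiguously identifies which branch we are in. The main obstacle I anticipate is pure bookkeeping rather than any real difficulty: confirming that the two summands really do have distinct $\gde$'s so that absorption applies (this is precisely what makes equation (\ref{eq:int}) the essential ingredient), and tracking the exponents of $\sqrt{2}$ correctly in each branch so that $2\alpha$ versus $2\beta+1$ drops out on the nose.
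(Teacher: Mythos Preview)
Your proposal is correct and follows essentially the same approach as the paper: a case split on whether $\gde(b,2) < \gde(a,2)$ or $\gde(b,2) \ge \gde(a,2)$, using equation~(\ref{eq:int}) together with absorption~(\ref{eq:absorption}) and base extraction~(\ref{eq:base-extr}) to compute $\gde(a+\sqrt{2}b)$ in each case, and then observing that the two cases form a partition yielding opposite parities, which upgrades the one-way implications to equivalences. The only cosmetic difference is that you factor out the common power of $\sqrt{2}$ explicitly before invoking absorption, whereas the paper applies absorption directly to $a$ and $\sqrt{2}b$; the logic is identical.
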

\begin{proof}
Consider the case when $\gde\left(b,2\right)<\gde\left(a,2\right)$.
Observing, from equation (\ref{eq:int}), that $\gde\left(a\right)$ is always even,
$\gde\left(a\right)>\gde\left(\sqrt{2}b\right)$,
and by the absorption property (\ref{eq:absorption}) of $\gde$ we
have $\gde\left(a+\sqrt{2}b\right)=\gde\left(\sqrt{2}b\right)$. Using
the base extraction property (\ref{eq:base-extr}) of $\gde$ and the 
relation (\ref{eq:int}) between $\gde\left(\cdot\right)$ and
$\gde\left(\cdot,2\right)$
for integers we obtain $\gde\left(a+\sqrt{2}b\right)=1+2\gde\left(b,2\right)$.
The other case similarly implies
$\gde\left(a+\sqrt{2}b\right)=2\gde\left(a,2\right)$.
In terms of real elements of the ring $\Zr$, this results in the following
relations: 
\begin{align*}
A_{1}=\left\{ \gde\left(b,2\right)<\gde\left(a,2\right)\right\}  & \subseteq
B_{1}=\left\{ \gde\left(a+\sqrt{2}b\right)\mbox{ is even}\right\}, \\
A_{2}=\left\{ \gde\left(b,2\right)\ge\gde\left(a,2\right)\right\}  & \subseteq
B_{2}=\left\{ \gde\left(a+\sqrt{2}b\right)\mbox{ is odd}\right\}. 
\end{align*}
We note that each pair of sets $\{A_{1},A_{2}\}$ and $\{B_{1},B_{2}\}$ defines
a partition of real elements of the ring $\Zr$. This completes the proof since when for partitions $\{A_{1},A_{2}\}$ and $\{B_{1},B_{2}\}$
of some set the inclusions $A_{1}\subseteq B_{1},A_{2}\subseteq B_{2}$
imply $A_{1}=B_{1}$ and $A_{2}=B_{2}$.
\end{proof}

To express $\abs{x+\w^{k}y}^{2}$ in the form $a+\sqrt{2}b$ in a concise
way, we introduce two quadratic forms $\P{\cdot}$ and
$\Q{\cdot}$ with the property: 
\begin{equation}
\abs x^{2}=\P{x}+\sqrt{2} \Q{x}.\label{eq:abs}
\end{equation}
Given that $x$, an element of $\Zr$, can be expressed
in terms of the integer number coordinates as follows, $x=x_{0}+x_{1}\w+x_{2}\w^{2}+x_{3}\w^{3}$,
we define the quadratic forms as: 
\begin{equation}
\P{x}:=x_{0}^{2}+x_{1}^{2}+x_{2}^{2}+x_{3}^{2},\label{eq:ipxx}
\end{equation}
\begin{equation}
\Q{x}:=x_{0}\left(x_{1}-x_{3}\right)+x_{2}\left(x_{1}+x_{3}\right).\label{eq:ips2xx}
\end{equation}
%The reason why the second quadratic form is denoted as $\frac{1}{2}\ip{\sqrt{2}x}x$
%becomes clear from the discussion in Appendix \mathref{app1}{1}.

Let us rewrite equality $\gde\left(\abs{x+y}^{2}\right)=4$
in terms of these quadratic forms and the $\gde$ of base $2$. Using
Proposition \ref{prop:gdeeee} we can write: 
\[
\gde\left(\P{x+\w^{k}y},2\right)=2,
\]
\[
\gde\left(\Q{x+\w^{k}y},2\right)\ge 2.
\]

Similar to the example given at the beginning of this section, we see that
it suffices to know the values of the quadratic forms modulo $2^{3}$.
To compute them, it suffices to know the values of the integer coefficients
of $x$ and $y$ modulo $2^{3}$.  This follows from the expression
of the product $\w y$ in terms of the integer number coefficients: 
\[
\w\left(y_{1}+y_{2}\w+y_{3}\w^{2}+y_{4}\w^{3}\right)=-y_{4}+y_{1}\w+y_{2}\w^{2}
+y_{3}\w^{3},
\]
and from the following two observations:
\begin{itemize}
\item integer number coefficients of the sum of two elements of the ring $\R$ is the sum of their integer number
coefficients, 
\item for any element of $\Zr$, $x$, the values of quadratic forms $\P{x}$ and $\Q{x}$
modulo $2^{3}$ are defined by the values modulo $2^{3}$ of the integer number coefficients of $x$. 
\end{itemize}
In summary, to check the second part of Lemma \ref{thm:Main} we
need to consider all possible values for the integer coefficients
of $x$ and $y$ modulo $2^{3}$.  There are two additional constraints on
them. The first one is $\abs x^{2}+\abs y^{2}=2^{m}$. Since we assumed
$m\ge 4$, we can write necessary conditions to satisfy this constraint,
in terms of the quadratic forms, as: 
\[
\P{x}\equiv -\P{y} \left(mod\,2^{3}\right),
\]
\[
\Q{x} \equiv -\Q{y} \left(mod\,2^{3}\right).
\]
The second constraint is $\gde\left(\abs x^{2}\right)=\gde\left(\abs y^{2}\right)$
and $\gde\left(\abs x^{2}\right)\le 1$. To check it, we use the same
approach as in the example with $\gde\left(\abs{x+y}^{2}\right)=4$.

We have now introduced the necessary notions required to prove Lemma \ref{thm:the-second}.

\begin{proof}
%[Proof of Lemma \ref{thm:the-second}]
Our proof is an exhaustive verification, assisted by a computer search.  We rewrite the statement of the lemma formally as follows: 
\begin{equation}
\left.\begin{array}{l}
\mathcal{G}_{j}=\left\{ \begin{array}{c|c}
\left(x,y\right)\in\Zr\times\Zr & \exists m\ge 4 \mbox{ s.t.} \abs x^{2}+\abs
y^{2}=2^{m},\\
 & \gde\left(x\right)=\gde\left(y\right)=j
\end{array}\right\} ,j\in\left\{ 0,1\right\} ,\\
\mbox{for all }\left(x,y\right)\in\mathcal{G}_{j},\mbox{for all }s\in\left\{
1,2,3\right\} \mbox{ there exists }k\in\left\{ 0,1,2,3\right\} \\
\mbox{such that }\gde\left(\abs{x+\w^{k}y}^{2}\right)=s+j.
\end{array}\right\} \label{eq:Formal}
\end{equation}
The sets $\mathcal{G}_{j}$ are infinite, so it is impossible to perform
the check directly.  As we illustrated with an example, equality
$\gde\left(\abs{x+\w^{k}y}^{2}\right)=s+j$
depends only on the values of the integer coordinates of $x$ and $y$ modulo
$2^{3}$. If the sets $\mathcal{G}_{j}$ were also defined in terms
of the residues modulo $2^{3}$ we could just check the lemma in terms
of equivalence classes corresponding to different residuals.  More
precisely, the equivalence relation $\sim$ we would use is: 
\[
\sum_{p=0}^{3}x_{p}\w^{p}\sim\sum_{p=0}^{3}y_{p}\w^{p}\overset{def}{
\Longleftrightarrow}\mbox{ for all }p\in\left\{ 0,1,2,3\right\}: x_{p}\equiv y_{p}\mt3.
\]
To address the issue, we introduce sets $\mathcal{Q}_{j}$ that include
$\mathcal{G}_{j}$ as subsets: 
\[
\mathcal{Q}_{j}=\left\{ \begin{array}{c|c}
\left(x,y\right)\in\Zr\times\Zr & \gde\left(x\right)=\gde\left(y\right)=j\\
 & \P{x}+\P{y}\equiv 0\mt3\\
 & \Q{x} + \Q{y} \equiv 0\mt3
\end{array}\right\}, j\in\left\{ 0,1\right\}. 
\]
Therefore, in terms of the equivalence classes with respect to the above defined relation
$\sim$ the more general problem can be verified in a finite number
of steps.  However, the number of equivalence classes is large.  This is why 
we employ a computer search that performs verification of all cases.  To rewrite (\ref{eq:Formal})
into conditions in terms of the equivalence classes it suffices to replace
$\mathcal{G}_{j}$ by $\mathcal{Q}_{j}$, replace $x$ and $y$ by their
equivalence classes, and replace $\Zr$ by the set of equivalence classes $\Zr/\sim$.

Algorithm \ref{alg:Verification-of-second} verifies Lemma \ref{thm:the-second}.
We use bar (e.g., $\overline{x}$ and $\overline{y}$) to represent 4-dimensional vectors with entries in $\Z_{8}$, the ring of residues
modulo $8$.  The definition of bilinear forms, multiplication by $\w$
and the relations $\gde\left(\abs{\cdot}^{2}\right)=1,2,3,4$ extend to
$\overline{x}$ and $\overline{y}$.  We implemented Algorithm \ref{alg:Verification-of-second} 
and the result of its execution is $true$.  This completes the proof.
\end{proof}

\begin{algorithm}[H]
\begin{algorithmic}
\Ensure Returns $true$ if the statement of Lemma \ref{thm:the-second} is correct; otherwise, returns $false$. \\
\Comment Here, $G_{j,a,b}$ is the set of all residue vectors $\ox$ such that $\gde(\ox)=j,\P{\ox}=a,\Q{\ox} =b$.
\ForAll { $x_1,x_2,x_3,x_4 \in \{0,\ldots,7\}$ } \Comment generate possible
residue vectors;
	\State $\ox\leftarrow(x_1,x_2,x_3,x_4)$
	\State $j\leftarrow \gde(\abs{\ox}^{2})$, $a\leftarrow
\P{\ox}, b\leftarrow \Q{\ox}$
	\If{ $j \in \{0,1\}$ }
		\State add $\overline{x}$ to $G_{j,a,b}$ 
	\EndIf
\EndFor
\ForAll{ $j\in\{0,1\},a_x \in \{0,7\},b_x \in \{0,7\} $ }
	\State $a_y \leftarrow -a_x\,mod\,8$, $b_y \leftarrow -b_x\,mod\,8$ 
\Comment consider only those pairs that
	\ForAll{ $(\ox,\oy) \in G_{j,a_x,b_x}\times G_{j,a_y, b_y }$ } \Comment
satisfy necessary conditions;
		\ForAll{ $d \in \{1,2,3\}$ } 
			\State state $\leftarrow$ unfound
			\ForAll{ $k \in \{0,1,2,3\}$ }
				\State $\overline{t} \leftarrow \ox + \w^k\oy$
				\If{ $\gde(\abs{\overline{t}}^{2})$ = d + j }
					\State state $\leftarrow$ found
				\EndIf
			\EndFor
			\If {state = unfound} 
				\State \Return $false$
			\EndIf
		\EndFor
	\EndFor
\EndFor
\State \Return $true$
\end{algorithmic}
\caption{\label{alg:Verification-of-second}Verification of Lemma \ref{thm:the-second}.}
\end{algorithm}

\section{Implementation}
Our C++ implementation of Algorithm \ref{alg:Decomposition-of-unitary} is available online at \url{http://code.google.com/p/sqct/}.

\section{Experimental results} 
\noindent
Table \ref{tab:1} summarizes the results of first obtaining an
approximation of the given rotation matrix by a unitary over the ring $\R$
using our implementation of the Solovay-Kitaev algorithm \cite{Dawson2008, K1},
and then decomposing it into a circuit using the exact synthesis Algorithm
\ref{alg:Decomposition-of-unitary} presented in this paper.  We note that the 
implementation of our synthesis Algorithm \ref{alg:Decomposition-of-unitary} (runtimes found in the column $t_{decomp}$) 
is significantly faster than the implementation of the Solovay-Kitaev algorithm used to approximate the unitary (runtimes reported in the column $t_{approx}$). 
Furthermore, we were able to calculate approximating 
circuits using 5 to 7 iterations of the Solovay-Kitaev algorithm followed by our synthesis algorithm.  The total runtime to approximate and decompose unitaries
ranged from approximately 11 to 600 seconds, correspondingly, featuring best approximating errors on the order of $10^{-50}$, 
and circuits with up to millions of gates.  Actual specifications of all circuits reported, as well as those synthesized but not explicitly 
included in the Table \ref{tab:1}, due to space constraints, may be obtained from \url{http://qcirc.iqc.uwaterloo.ca/}. 

On each step Algorithm \ref{alg:Decomposition-of-unitary} chooses from one of four small circuits: H, HT, HT$^2$ (=HP), and HT$^3$ to reduce $\sde$.  In 
practice, Pauli-Z gate is often easier to implement than either Phase or T gate.  The cost of P$^{\dagger}$ and T$^{\dagger}$ gates is usually the same as that of the respective P and T gates.  
We took this into account by writing circuit HT$^3$ using an equivalent and cheaper form HZT$^{\dagger}$.  This significantly reduces the number of Phase gates required to implement a unitary. 
If there is no preference between the choice of P or Z, or P is preferred to Z, the HPT could be used in place of HT$^3$.

\renewcommand{\arraystretch}{1.3}
\begin{table}
\caption[Results of the approximation of $\Rot_z(\varphi)$]{\label{tab:1}Results of the approximation of 
$\Rot_z(\varphi)=\left(
\begin{array}{cc}
e^{-i \varphi} & 0 \\
0 & e^{i \varphi} 
\end{array} \right)$ by our implementation.  
Column $N_{I}$ contains the number of iterations used by the Solovay-Kitaev algorithm, $n_g$---total
number of gates (sum of the next four columns), $n_{T}$---number of T and T$^{\dagger}$ 
gates, $n_{H}$---number of Hadamard gates, $n_{P}$---number of P and P$^{\dagger}$ 
gates, $n_{Pl}$---number of Pauli gates (note that the combined number of Pauli-X and Pauli-Y is never more than three for any of the circuits, so 
$n_{Pl}$ is dominated by Pauli-Z gates), $dist$---trace distance to
approximation, $t_{approx}$---time spent on the unitary approximation using the Solovay-Kitaev algorithm (in seconds),
$t_{decomp}$---time spent on the decomposition of the approximating unitary into circuit, 
per Algorithm \ref{alg:Decomposition-of-unitary} (in seconds). Circuit specifications are available at \url{http://qcirc.iqc.uwaterloo.ca/}.
}
\centerline{\footnotesize 
\begin{tabular}{|c||c||c|c|c|c|c||c||c|c|}
\hline 
$U$ & $N_{I}$ & $n_g$ & $n_{T}$ & $n_{H}$ & $n_{P}$ & $n_{Pl}$ & $dist$ & $t_{approx}$ & $t_{decomp}$\tabularnewline
\hline 
\hline 
\multirow{5}{*}{$R_{z}\left(\frac{\pi}{16}\right)$} 
& 0 & 74 & 28 & 27 & 2 & 17 & $1.34296\times 10^{-3}$ & 0.09180 & 0.00022\tabularnewline \cline{2-10}
& 1 & 342 & 132 & 132 & 1 & 77 & $4.61204\times 10^{-5}$ & 0.12238 & 0.00079\tabularnewline \cline{2-10}
& 2 & 1683 & 670 & 670 & 1 & 342 & $5.68176\times 10^{-7}$ & 0.22369 & 0.00391\tabularnewline \cline{2-10}
& 3 & 8197 & 3284 & 3283 & 0 & 1630 & $2.97644\times 10^{-10}$ & 0.83023 & 0.02077\tabularnewline \cline{2-10}
& 4 & 35819 & 14312 & 14311 & 3 & 7193 & $3.64068\times 10^{-15}$ & 2.91120 & 0.12453\tabularnewline \cline{2-10}
\hline 
\hline 
\multirow{5}{*}{$R_{z}\left(\frac{\pi}{32}\right)$} 
& 0 & 64 & 24 & 23 & 2 & 15 & $3.92540\times 10^{-4}$ & 0.01805 & 0.00021\tabularnewline \cline{2-10}
& 1 & 314 & 124 & 123 & 2 & 65 & $1.34267\times 10^{-5}$ & 0.05238 & 0.00074\tabularnewline \cline{2-10}
& 2 & 1388 & 556 & 556 & 0 & 276 & $4.65743\times 10^{-7}$ & 0.31252 & 0.00321\tabularnewline \cline{2-10}
& 3 & 7493 & 3000 & 2999 & 1 & 1493 & $1.10252\times 10^{-10}$ & 0.90493 & 0.01950\tabularnewline \cline{2-10}
& 4 & 35113 & 14054 & 14053 & 2 & 7004 & $2.69806\times 10^{-15}$ & 2.96710 & 0.12122\tabularnewline \cline{2-10}
\hline 
\hline 
\multirow{5}{*}{$R_{z}\left(\frac{\pi}{64}\right)$} 
& 0 & 54 & 22 & 23 & 2 & 7 & $8.05585\times 10^{-4}$ & 0.08827 & 0.00020\tabularnewline \cline{2-10}
& 1 & 344 & 136 & 136 & 3 & 69 & $9.57729\times 10^{-6}$ & 0.12163 & 0.00080\tabularnewline \cline{2-10}
& 2 & 1414 & 564 & 564 & 3 & 283 & $1.97877\times 10^{-7}$ & 0.38928 & 0.00326\tabularnewline \cline{2-10}
& 3 & 7769 & 3086 & 3087 & 3 & 1593 & $1.08884\times 10^{-10}$ & 0.98522 & 0.01954\tabularnewline \cline{2-10}
& 4 & 35456 & 14170 & 14171 & 2 & 7113 & $3.00267\times 10^{-15}$ & 3.05440 & 0.12384\tabularnewline \cline{2-10}
\hline 
\hline 
\multirow{5}{*}{$R_{z}\left(\frac{\pi}{128}\right)$} 
& 0 & 72 & 28 & 29 & 1 & 14 & $9.59916\times 10^{-4}$ & 0.08822 & 0.00023\tabularnewline \cline{2-10}
& 1 & 344 & 136 & 137 & 3 & 68 & $1.79353\times 10^{-5}$ & 0.12143 & 0.00081\tabularnewline \cline{2-10}
& 2 & 1588 & 634 & 634 & 4 & 316 & $3.67734\times 10^{-7}$ & 0.39048 & 0.00368\tabularnewline \cline{2-10}
& 3 & 7519 & 3004 & 3005 & 2 & 1508 & $4.23657\times 10^{-10}$ & 0.98045 & 0.01890\tabularnewline \cline{2-10}
& 4 & 34388 & 13722 & 13722 & 2 & 6942 & $1.32046\times 10^{-14}$ & 2.86740 & 0.11832\tabularnewline \cline{2-10}
\hline 
\hline 
\multirow{5}{*}{$R_{z}\left(\frac{\pi}{256}\right)$} 
& 0 & 71 & 28 & 29 & 2 & 12 & $5.06207\times 10^{-4}$ & 0.01819 & 0.00023\tabularnewline \cline{2-10}
& 1 & 326 & 136 & 136 & 2 & 52 & $1.08919\times 10^{-5}$ & 0.05474 & 0.00079\tabularnewline \cline{2-10}
& 2 & 1389 & 566 & 567 & 3 & 253 & $2.00138\times 10^{-7}$ & 0.30498 & 0.00332\tabularnewline \cline{2-10}
& 3 & 7900 & 3174 & 3175 & 3 & 1548 & $2.91716\times 10^{-10}$ & 0.91405 & 0.02060\tabularnewline \cline{2-10}
& 4 & 38188 & 15290 & 15291 & 1 & 7606 & $8.87785\times 10^{-15}$ & 2.98030 & 0.13545\tabularnewline \cline{2-10}
\hline 
\hline 
\multirow{5}{*}{$R_{z}\left(\frac{\pi}{512}\right)$} 
& 0 & 76 & 30 & 29 & 2 & 15 & $3.62591\times 10^{-4}$ & 0.01749 & 0.00023\tabularnewline \cline{2-10}
& 1 & 319 & 126 & 126 & 2 & 65 & $1.95491\times 10^{-5}$ & 0.05171 & 0.00075\tabularnewline \cline{2-10}
& 2 & 1722 & 680 & 680 & 2 & 360 & $2.76529\times 10^{-7}$ & 0.30618 & 0.00396\tabularnewline \cline{2-10}
& 3 & 8122 & 3242 & 3242 & 2 & 1636 & $1.87476\times 10^{-10}$ & 0.92576 & 0.02109\tabularnewline \cline{2-10}
& 4 & 34974 & 13992 & 13992 & 1 & 6989 & $5.66762\times 10^{-15}$ & 3.16060 & 0.11920\tabularnewline \cline{2-10}
\hline 
\hline 
\multirow{5}{*}{$R_{z}\left(\frac{\pi}{1024}\right)$} 
& 0 & 0 & 0 & 0 & 0 & 0 & $2.16938\times 10^{-3}$ & 0.08622 & 0.00005\tabularnewline \cline{2-10}
& 1 & 264 & 106 & 105 & 2 & 51 & $5.57373\times 10^{-5}$ & 0.13615 & 0.00063\tabularnewline \cline{2-10}
& 2 & 1541 & 622 & 622 & 3 & 294 & $1.74595\times 10^{-7}$ & 0.23445 & 0.00366\tabularnewline \cline{2-10}
& 3 & 6791 & 2722 & 2722 & 1 & 1346 & $5.39912\times 10^{-11}$ & 0.82811 & 0.01703\tabularnewline \cline{2-10}
& 4 & 32983 & 13188 & 13188 & 1 & 6606 & $5.54995\times 10^{-16}$ & 2.98480 & 0.11494\tabularnewline \cline{2-10}
\hline 
\end{tabular}}
\end{table}

\begin{figure}[t]
\begin{center}
\includegraphics{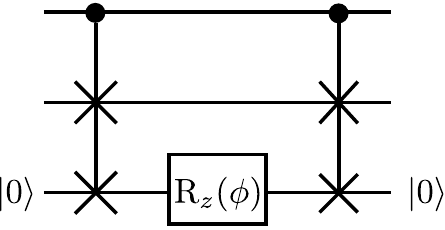}
\end{center}
\vspace{13pt}
\caption{\label{fig:c-r} Circuit implementing the controlled-$\Rot_z(\phi)$ gate. Upper
qubit is the control, middle qubit is the target, 
and bottom qubit is the ancilla.}
\end{figure}

The RAM memory requirement during unitary approximation stage for our implementation is 2.1GB. In our experiments we used a single core 
of the Intel Core i7-2600 (3.40GHz) processor.

The experimental results reported may be utilized in the construction of an approximate implementation of the 
Quantum Fourier Transform (QFT).  Figure \ref{fig:c-r} shows a circuit that
employs the technique from \cite{swapTr} to implement the controlled-$\Rot_z(\phi)$ using a single ancillary qubit.  Note that $\ket{0}$ 
is the eigenvector of $\Rot_z(\phi)$ with the eigenvalue $1$, thus this construction works correctly; moreover, no phase is introduced.  Such controlled 
rotations are used in the standard implementation of the QFT \cite{bk:nc}.  The advantage of such a circuit is that it
introduces only a small additive constant overhead on the number of gates required to turn an uncontrolled rotation into a controlled rotation. 
Indeed, the number of T gates (those are more complex in the fault tolerant implementations 
than the Clifford gates \cite{F:PhD}) required for the exact implementation of the Fredkin gate, being 7 \cite{Amy}, is small compared 
to the number of T gates required (the T-counts in the circuits we synthesize are provably minimal for the unitary being implemented) in the approximations of the individual single-qubit rotations, Table \ref{tab:1}.
In comparison to other approaches to constructing a controlled gate out of an uncontrolled gate, such as replacing each gate in the circuit 
by its controlled version, or using the decomposition provided by Lemma 5.1 in \cite{Barenco1995} (achieving a two-qubit controlled gate using 
three single-qubit uncontrolled gates, that has the expected effect of roughly tripling the number of T gates), the proposed approach 
where a single ancilla is employed appears to be beneficial. 

We approximate the controlled-$\Rot_z(\phi)$ by replacing $\Rot_z(\phi)$ with its approximation $\Rot_z'(\phi)$.
To evaluate the quality of such approximation we need to take 
into account that ancillary qubit is always initialized to $\ket{0}$ and that the
controlled rotation is a part of a larger circuit.  For this reason, 
we computed the completely bounded trace norm \cite{cbn} of the difference of the channels
corresponding to the controlled-$\Psi_{\Rot_z(\phi)}$ and its approximation using $\Psi_{ \Rot_z'(\phi) }$.  Both channels 
map the space of two qubit density matrices into the space of three qubit density matrices, as one of the inputs is fixed.

To compute the completely bounded trace norm we used the semidefinite program (SDP)
described in \cite{cbn}.  Usual 64-bit machine precision was not enough for our purposes, so
we used package SDPA-GMP \cite{sdpa} that employs The GNU Multiple Precision Arithmetic Library to solve SDP.
Also, Mathematica was used to generate files with the description of SDP problems in the input format
of SDPA-GMP.  We found that for all unitaries in Table \ref{tab:1} the ratio of completely bounded trace norm of 
$\Psi_{\Rot_z(\phi)} - \Psi_{ \Rot_z'(\phi) }$ and trace distance between $\Rot_z(\frac{\pi}{2^n})$ and its approximation belongs 
to the interval $[2.82842,2.98741]$.  In other words, the numerical value of the approximation error for the controlled rotations using the trace norm is roughly three 
times that for the corresponding single-qubit rotations using the trace distance (as per Table \ref{tab:1}).  

\begin{figure}[t]
\centerline{\includegraphics[width=1.0\textwidth]{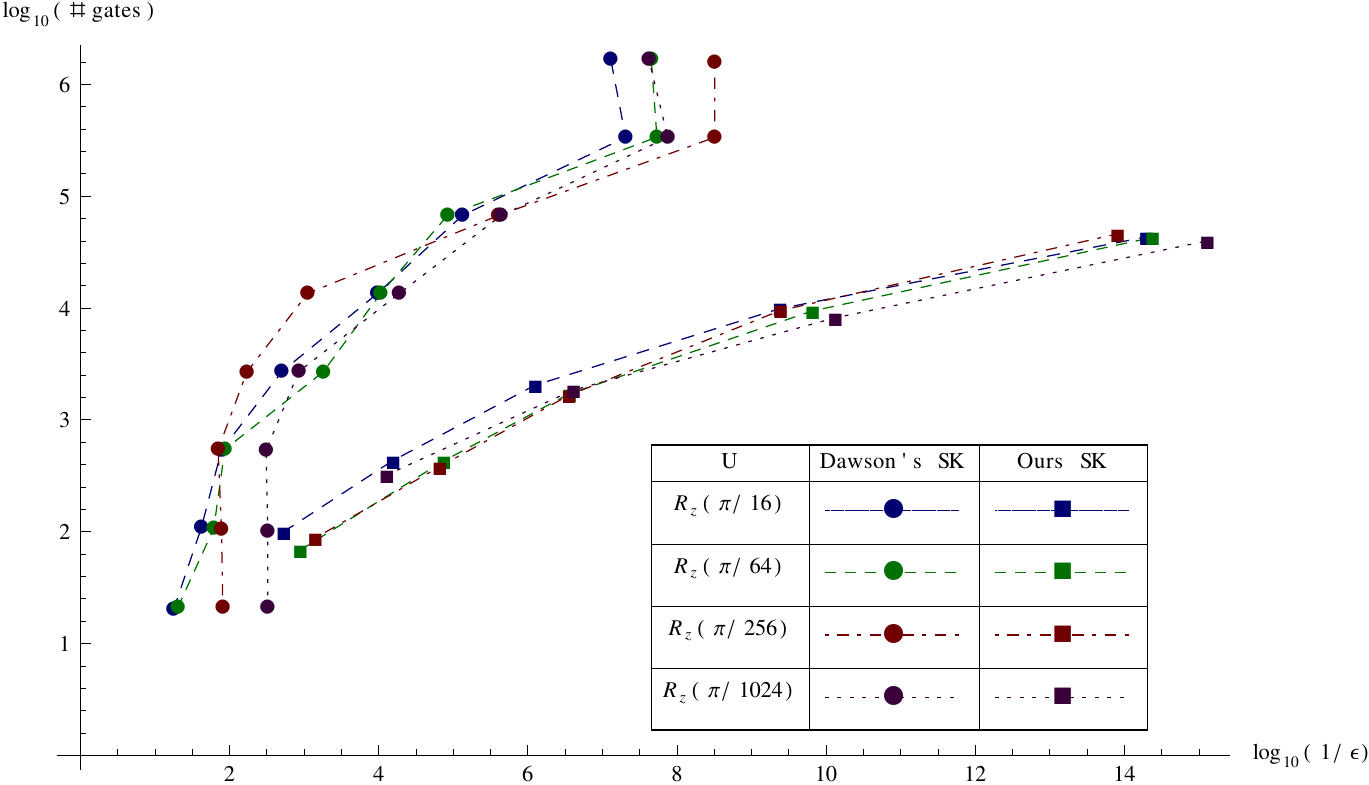}} %100 percent
\vspace*{13pt}
\caption{\label{fig:comp}Comparison between ours and Dawson's implementations of the Solovay-Kitaev algorithm. Vertical axis shows $\log_{10}$ of the number of gates
and horizontal axis shows $\log_{10}\left(1/\epsilon\right)$, where
$\epsilon$ is the projective trace distance between unitary and its
approximation.}
\end{figure}

We also performed a comparison to Dawson's implementation of the Solovay-Kitaev
algorithm (see Figure~\ref{fig:comp}) available at \url{http://gitorious.org/quantum-compiler/}. We ran Dawson's code
using the gate library \{H, T\} with the maximal sequence length equal to 22 and tile width equal to 0.14. 
During this experiment, the memory usage was around 6 GB.  For the purpose of the comparison gate counts 
for our implementation are also provided in the \{H, T\} library (Z=T${}^4$ and P=T${}^2$).  We used projective trace distance to measure quality of 
approximation as it is the one used in Dawson's code.  Because of the larger epsilon net used in our implementation we were 
able to achieve better approximation quality using fewer iterations of the Solovay-Kitaev
algorithm.  Usage of The GNU Multiple Precision Arithmetic Library allowed us to achieve precision up to $10^{-50}$ while 
Dawson's code encounters convergence problem when precision reaches $10^{-8}$.  The latter explains behaviour of the last set of points in the
experimental results for Dawson's code reported in Figure \ref{fig:comp}. 

Two other experiments that we performed with Dawson's code are a resynthesis of the
circuits generated by it using our exact decomposition algorithm.  We first
resynthesised circuits that were generated by Dawson's code using the \{H,T\}
library.  In most cases, the gate counts reduced by about 10-20\% (our resulting
circuits were further decomposed such as to use the \{H, T\} gate library).  In the other experiment we used
\{H, T, P, Z\} gate library with Dawson's implementation.  In this case, we were able
to run Dawson's code with the sequences of length 9 only, and it used 6 GB of memory.  The
gate counts, using our algorithm, decreased by about 40-60\%.  

\section{Conclusion}
\noindent
In this paper, we studied quantum circuits over the Clifford and T library. We proved that in the single-qubit case the set of unitaries over the ring $\R$ is equivalent to the set of 
unitaries computable by the Clifford and T circuits.  We generalized this statement to conjecture that in the $n$-qubit 
case the sets of unitaries computable by the Clifford and T circuits and those over the ring $\R$ are equivalent, as long 
as a single ancillary qubit residing in the state $\ket{0}$ is supplied.  While we did not prove this conjecture, we showed 
the necessity of ancilla. 

We have also presented a single-qubit synthesis algorithm that uses Pauli, H, P, and T gates.  Our algorithm is asymptotically optimal 
in both its performance guarantee, and its complexity.  The algorithm generates circuits with a provably minimal number of Hadamard and T gates. 
Furthermore, our experiments suggest that the P-counts may also be minimal.  The total number of times Pauli-X and Pauli-Y gates 
are used in any given circuit generated by the algorithm is limited to at most three.  As such, our algorithm is likely optimal (up to, possibly, a small additive constant on some of the gate counts, that, in turn, 
may be corrected by re-synthesizing the lookup table using a different/suitable circuit cost metric) in all parameters, except, possibly, Pauli-Z count.

\section*{Acknowledgements}
\noindent
We wish to thank Martin Roetteler for helpful discussions. 

Authors supported in part by the Intelligence Advanced Research Projects
Activity (IARPA) via Department of Interior National Business Center
Contract number DllPC20l66. The U.S. Government is authorized to reproduce
and distribute reprints for Governmental purposes notwithstanding
any copyright annotation thereon. Disclaimer: The views and conclusions
contained herein are those of the authors and should not be interpreted
as necessarily representing the official policies or endorsements,
either expressed or implied, of IARPA, DoI/NBC or the U.S. Government.

This material is based upon work partially supported by the National Science Foundation (NSF), during D. Maslov's assignment at the Foundation.
Any opinion, findings, and conclusions or recommendations expressed in this material are those of the author(s) and do not necessarily reflect the views of the National Science Foundation.

Michele Mosca is also supported by Canada's NSERC, MITACS, CIFAR, and CFI.
IQC and Perimeter Institute are supported in part by the Government of Canada
and the Province of Ontario.

% \section*{References}
% \noindent

\section*{Appendix~A}
\label{app:A}
\noindent
\label{app:1}
Here we prove properties of the greatest dividing exponent that was defined
and used in Section \ref{sec:Sequence-for-state}.  We first discuss
the base extraction property (\ref{eq:base-extr}) of $\gde$ and then
proceed to the proof of special properties of $\gde\left(\cdot,\sqrt{2}\right)$.
The base extraction property simplifies proofs of all statements related
to $\gde\left(\cdot,\sqrt{2}\right)$. 
\begin{prop}[Base extraction property]
If $x,y\in\Zr$, then for any non negative integer number $k$
\[
\gde\left(yx^{k},x\right)=k+\gde\left(y,x\right).
\]
\end{prop}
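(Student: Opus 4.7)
The plan is to unpack both sides of the identity $\gde(yx^{k},x) = k + \gde(y,x)$ directly from the definition of the greatest dividing exponent, reducing the claim to a single algebraic rearrangement inside the integral domain $\Zr$. Concretely, I would let $m := \gde(y,x)$, use the definition to write $y = x^{m}q$ for some $q \in \Zr$ with $x \nmid q$, and then simply multiply by $x^{k}$ to obtain the presentation $yx^{k} = x^{m+k}q$. I would then read off the two clauses in the definition of $\gde(yx^{k},x)$ from this single expression.

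First I would dispose of the degenerate case $\gde(y,x) = \infty$. By the definition given in the paper, this corresponds either to $y = 0$ (so $yx^{k}=0$ and $\gde(yx^{k},x)$ is likewise infinite) or to the situation in which $x^{j}$ divides $y$ for every non-negative integer $j$; in the latter case $x^{j}$ also divides $yx^{k}$ for every $j$, giving $\gde(yx^{k},x) = \infty$ again. Either way the asserted equality holds under the convention $k + \infty = \infty$. With this case out of the way, assume $m = \gde(y,x)$ is a finite non-negative integer and pick the corresponding factorisation $y = x^{m}q$ with $x \nmid q$.

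Multiplying through by $x^{k}$ produces $yx^{k} = x^{m+k}q$, which already establishes that $x^{m+k}$ divides $yx^{k}$. To confirm $\gde(yx^{k},x) = m+k$, I still need the non-divisibility clause, namely that $x$ does not divide the quotient $yx^{k}/x^{m+k}$; but this quotient is exactly $q$, and $x \nmid q$ by construction. Combining the two clauses gives $\gde(yx^{k},x) = m+k = k + \gde(y,x)$, which is the claim.

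The only point requiring even mild care is the cancellation step identifying the quotient with $q$: this is legitimate because $\Zr$ is an integral domain, so multiplication by the non-zero element $x^{m+k}$ is injective and the factorisation $yx^{k} = x^{m+k}q$ uniquely determines $q$ (the case $x = 0$ is excluded implicitly, as $\gde(\cdot,0)$ is not sensibly defined). There is no substantive obstacle; the statement is essentially a formal consequence of the definition, which is exactly why the paper elevates it to a standalone lemma to streamline the subsequent base-$\sqrt{2}$ arguments.
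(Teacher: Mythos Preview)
Your proposal is correct and follows exactly the approach the paper takes: the paper's own proof consists of the single sentence ``Follows directly from the definition of $\gde$,'' and your argument is simply a careful unpacking of that definition, handling the infinite case and then writing $y=x^{m}q$ to read off both clauses for $yx^{k}$. There is nothing to add or correct.
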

\begin{proof}
Follows directly from the definition of $\gde$.
%Let $k_{y}=\gde\left(y,x\right)$. By definition of $\gde$, $x^{k+k_{y}}$ divides $yx^{k}$ and $\gde\left(yx^{k},x\right)\ge k+k_{y}$.
%Suppose $\gde\left(yx^{k},x\right)=k+k_{y}+1$. Using definition of
%$\gde$ again we get $yx^{k}=y'x^{k_{y}+k}$ and conclude that
%$\gde\left(y\right)\ge k_{y}+1$,
%which is a contradiction.
\end{proof}

The base extraction property together with non-negativity
of $\gde$ provide a simple formula to lower bound the value of $\gde$: if $x^{k}$ divides
$y$ then $\gde\left(y,x\right)\ge k$.  Inequality for $\gde$ of a
sum (\ref{eq:splitting}) follows directly from this---$x^{\min\left(\gde\left(y,x\right),\gde\left(y',x\right)\right)}$
divides $y+y'$. The proof of absorption property (\ref{eq:absorption}) follows easily, as well. 

Now we prove properties of $\gde$ specific to base $\sqrt{2}$. Instead
of proving them for all elements of $\Zr$ it suffices to prove
them for elements of $\Zr$ that are not divisible by $\sqrt{2}$. We illustrate
this with an example $\gde\left(x,\sqrt{2}\right)=\gde\left(\abs x^{2},2\right)$.
We can always write $x=x'\left(\sqrt{2}\right)^{\gde\left(x\right)}$.
By the definition of $\gde$, $\sqrt{2}$ does not divide $x'$. By substituting
the expression for $x$ into $\gde\left(\abs x^{2},2\right)$ and then using the
base extraction property we get: 
\[
\gde\left(\abs
x^{2},2\right)=\gde\left(\abs{x'}^{2},2\right)+\gde\left(x,\sqrt{2}\right).
\]
Therefore, it suffices to show that $\gde\left(\abs{x'}^{2},2\right)=0$
when $\sqrt{2}$ does not divide $x'$, or, equivalently, when
$\gde\left(x'\right)=0$. 

The quadratic forms defined in Section \ref{sec:Bilinear-forms-and} will
be a useful tool for later proofs. Bilinear forms that generalize
them are important for the proof of relation for
$\gde\left(\re\left(xy^{\ast}\right)\right)$.
Effectively, we only need the values of mentioned forms modulo $2$. For
this reason, we also introduce forms that are equivalent modulo $2$
and more convenient for the proofs. 

We define function $\F{\cdot}{\cdot}$ for $x,y\in\Zr$ as follows: 
\[
\F xy:=x_{0}y_{0}+x_{1}y_{1}+x_{2}y_{2}+x_{3}y_{3}.
\]

Note that the following equality holds, and provides some intuition behind the choice to introduce $\F{\cdot}{\cdot}$:
\[
\re\left(xy^{\ast}\right)=\F xy+\frac{1}{\sqrt{2}}\F{\sqrt{2}x}y.
\]

Using formula $\sqrt{2}=\w-\w^{3}$ we can rewrite multiplication by $\sqrt{2}$
as a linear operator: 
\begin{equation}
\sqrt{2}x=\sqrt{2}(x): x_0+x_1\w+x_2\w^2+x_3\w^3 \mapsto \left(x_{1}-x_{3}\right)+\left(x{}_{0}+x{}_{2}\right)\w+\left(x{}_{1}
+x{}_{3}\right)\w^{2}+\left(x_{2}-x_{0}\right)\w^{3}\label{eq:sqrt2}.
\end{equation}
In particular, it is easy to verify that:  
\[
\F{\sqrt{2}x}y=\left(x{}_{1}-x{}_{3}\right)y_{0}+\left(x{}_{0}+x{}_{2}\right)y_
{1}+\left(x{}_{1}+x{}_{3}\right)y_{2}+\left(x_{2}-x_{0}\right)y_{3},
\]
and, substituting $y=x$, 
\[
\F{\sqrt{2}x}x=2\left(x_{1}-x_{3}\right)x_{2}+2\left(x{}_{1}+x{}_{3}\right)x_{0} = 2\Q{x},
\]
which corresponds to the earlier definition shown in equation (\ref{eq:ips2xx}).  The definition of $\F{\cdot}{\cdot}$ written
for $x=y$ results in an earlier definition (\ref{eq:ipxx}).  This shows how $\F{\cdot}{\cdot}$ generalizes and ties together previously 
introduced $P(\cdot)$ and $Q(\cdot)$.

Furthermore, in modulo $2$ arithmetic the following expressions hold true: 
\begin{equation}
\P{x} \equiv \left(x_{1}+x_{3}\right)+\left(x_{0}+x_{2}\right)\m\label{eq:ipxxm2}
\end{equation}
\begin{equation}
\Q{x} \equiv \left(x_{1}+x_{3}\right)\left(x_{0}+x_{2}
\right)\m\label{ip2m2}
\end{equation}
\begin{equation}
\F{\sqrt{2}x}y\equiv \left(x_{1}+x_{3}\right)\left(y_{0}+y_{2}\right)+\left(x_{0}+x_{
2}\right)\left(y_{1}+y_{3}\right)\m\label{eq:s2xy}.
\end{equation}
It is easy to verify these equations by expanding the left and right hand sides. 

The next proposition shows how we use equivalent quadratic and bilinear
forms.
\begin{prop}
\label{prop:alternatives}If $\gde\left(x\right)=0$ there are only
two alternatives:
\begin{itemize}
\item $\P x$ is even and $\Q x$ is odd,
\item $\P x$ is odd and $\Q x$ is even.
\end{itemize}
\end{prop}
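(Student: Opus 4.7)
The plan is to work entirely modulo $2$, using the simplified expressions (\ref{eq:ipxxm2}) and (\ref{ip2m2}) for $\P{x}$ and $\Q{x}$ modulo $2$. Setting $a := x_0 + x_2 \bmod 2$ and $b := x_1 + x_3 \bmod 2$, those identities become $\P{x} \equiv a+b \m$ and $\Q{x} \equiv ab \m$. Enumerating the four possible values of $(a,b) \in \{0,1\}^2$ gives exactly three outcomes: $(0,0)$ yields both forms even, $(0,1)$ and $(1,0)$ yield $\P{x}$ odd and $\Q{x}$ even, and $(1,1)$ yields $\P{x}$ even and $\Q{x}$ odd. Thus the two alternatives listed in the proposition cover every case except $(a,b)=(0,0)$, and the proof reduces to ruling this case out under the hypothesis $\gde(x)=0$.

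To rule out $(a,b)=(0,0)$, I would show that $a = b = 0$ is equivalent to $\sqrt{2}$ dividing $x$, which contradicts $\gde(x) = 0$. One direction follows from the explicit formula (\ref{eq:sqrt2}): if $x = \sqrt{2}\, y$ for some $y \in \Zr$, then $x_0 + x_2 = (y_1 - y_3) + (y_1 + y_3) = 2 y_1$ and $x_1 + x_3 = (y_0 + y_2) + (y_2 - y_0) = 2 y_2$, so both are even. For the converse, assume $a = b = 0$, i.e. $x_0 + x_2$ and $x_1 + x_3$ are both even. Then solving the linear system induced by (\ref{eq:sqrt2}) for $y$ gives $y_1 = (x_0 + x_2)/2$, $y_3 = (x_2 - x_0)/2$, $y_0 = (x_1 - x_3)/2$, $y_2 = (x_1 + x_3)/2$, all of which lie in $\Z$ precisely under the assumed parity conditions. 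Hence $x = \sqrt{2}\, y \in \sqrt{2}\,\Zr$.

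Combining these two equivalences, $\sqrt{2}$ divides $x$ if and only if $(a,b) = (0,0)$. Since $\gde(x) = 0$ means $\sqrt{2} \nmid x$, the case $(a,b)=(0,0)$ cannot occur, and $x$ must fall into one of the three remaining cases, all of which correspond to exactly one of the two alternatives in the statement.

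I do not expect a genuine obstacle here: every step is a direct computation with the four integer coordinates of $x$, and the only mildly nontrivial point is the explicit inversion of the map $y \mapsto \sqrt{2}\, y$ using (\ref{eq:sqrt2}) to characterize divisibility by $\sqrt{2}$ in terms of the parities of $x_0 + x_2$ and $x_1 + x_3$. Once that characterization is in hand, the two-case disjunction is immediate from the mod $2$ identities for $\P{x}$ and $\Q{x}$.
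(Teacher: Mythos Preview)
Your proposal is correct and essentially follows the paper's approach: both arguments reduce to the mod~$2$ identities (\ref{eq:ipxxm2}) and (\ref{ip2m2}) together with the characterization that $\sqrt{2}\mid x$ iff $x_0+x_2$ and $x_1+x_3$ are both even. The only cosmetic difference is that the paper obtains this characterization by noting that $\gde(x)=0$ forces $2\nmid \sqrt{2}x$ and then reading off the coefficients of $\sqrt{2}x$ from (\ref{eq:sqrt2}), whereas you explicitly invert the map $y\mapsto\sqrt{2}y$; both routes yield the same parity condition and the same case analysis.
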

\begin{proof}
The equality $\gde\left(x\right)=0$ implies that $2$ does not divide
$\sqrt{2}x$. Using expression (\ref{eq:sqrt2}) for $\sqrt{2}x$
in terms of integer coefficients we conclude that at least one of the
four numbers $x'_{1}\pm x'_{3}$, $x'_{0}\pm x'_{2}$ must be odd.
Suppose that $x'_{1}+x'_{3}$ odd. Using formulas
(\ref{eq:ipxxm2},\ref{ip2m2})
we conclude that the values of $P(x)$ and $Q(x)$ must have different parity.  The remaining three
cases are similar.
\end{proof}

An immediate corollary is: $\gde\left(x\right)=0$ implies $\gde\left(\abs
x^{2},2\right)=0$.
To show this it suffices to use expression (\ref{eq:abs}) for
$\abs x^{2}$ in terms of quadratic forms. 

We can also conclude that $\sqrt{2}$ divides $x$ if and only if
$2$ divides $\abs x^{2}$. Sufficiency follows from the definition of
$\gde$. To prove that $2$ divides $\abs x^{2}$ implies $\sqrt{2}$
divides $x$, we assume that $2$ divides $\abs x^{2}$ and $\sqrt{2}$
does not divide $x$, which leads to a contradiction. This also results in
the inequality $\gde\left(\abs x^{2}\right)\le 1$ when $\gde\left(x\right)=0$. 

We use the next two propositions to prove the inequality for
$\re\left(\sqrt{2}xy^{\ast}\right)$.
\begin{prop}
Let $\gde\left(x\right)=0$: 
\begin{itemize}
\item if $\sqrt{2}$ divides \textup{$\abs x^{2}$} then $\P x$ is even
and $\Q x$ is odd,
\item if $\sqrt{2}$ does not divide \textup{$\abs x^{2}$} then $\P x$
is odd and $\Q x$ is even.
\end{itemize}
\end{prop}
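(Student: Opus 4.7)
The plan is to combine the decomposition $\abs x^{2} = \P{x} + \sqrt{2}\,\Q{x}$ from equation (\ref{eq:abs}) with the two immediately preceding propositions, namely Proposition \ref{prop:alternatives} (which says that under $\gde(x)=0$ exactly one of $\P{x}$ and $\Q{x}$ is odd) and Proposition \ref{prop:gdeeee} (which reads off $\gde(a+\sqrt{2}b)$ from the $2$-adic valuations of the integers $a$ and $b$). The point is that these two results together already encode all the parity information we need; we only have to match it up through equation (\ref{eq:abs}).

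Concretely, I would first invoke the observation made in the discussion right before Proposition \ref{prop:alternatives}: under the hypothesis $\gde(x) = 0$ one has $\gde(\abs x^{2}) \le 1$. Hence there are only two cases to distinguish, $\gde(\abs x^{2}) = 0$ (equivalently, $\sqrt{2}$ does not divide $\abs x^{2}$) and $\gde(\abs x^{2}) = 1$ (equivalently, $\sqrt{2}$ divides $\abs x^{2}$ but $2$ does not). I would then apply Proposition \ref{prop:gdeeee} to $\abs x^{2} = \P{x} + \sqrt{2}\,\Q{x}$ in each case: when $\gde(\abs x^{2}) = 0$, the value is even, so $\gde(\P{x},2) = 0$, i.e.\ $\P{x}$ is odd, and then Proposition \ref{prop:alternatives} forces $\Q{x}$ to be even, yielding the second bullet; when $\gde(\abs x^{2}) = 1$, the value is odd, so $\gde(\Q{x},2) = 0$, i.e.\ $\Q{x}$ is odd, and Proposition \ref{prop:alternatives} then forces $\P{x}$ to be even, yielding the first bullet.

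There is no real obstacle, because the technical work has already been done in Propositions \ref{prop:alternatives} and \ref{prop:gdeeee}. The only point worth being careful about is that the bound $\gde(\abs x^{2}) \le 1$ is essential: without it, Proposition \ref{prop:gdeeee} would pin down only the parity of $\gde(\abs x^{2})$ and not its exact value, so the case $\gde(\abs x^{2}) \ge 2$ would allow more freedom in the parities of $\P{x}$ and $\Q{x}$. It is precisely the hypothesis $\gde(x) = 0$ that rules this out and lets the two bullets exactly exhaust the possibilities.
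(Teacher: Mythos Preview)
Your argument is correct and takes a genuinely different route from the paper's. The paper does not invoke Proposition~\ref{prop:gdeeee}; instead it squares again, writing
\[
\abs{x}^{4}=\P{x}^{2}+2\Q{x}^{2}+2\sqrt{2}\,\P{x}\Q{x},
\]
and then uses the equivalence ``$\sqrt{2}$ divides $y$ iff $2$ divides $\abs{y}^{2}$'' with $y=\abs{x}^{2}$ to conclude that $\sqrt{2}\mid\abs{x}^{2}$ iff $\P{x}$ is even, after which Proposition~\ref{prop:alternatives} fixes the parity of $\Q{x}$. Your approach is cleaner: you stay at the level of $\abs{x}^{2}=\P{x}+\sqrt{2}\Q{x}$ and let Proposition~\ref{prop:gdeeee} read off the parity of $\P{x}$ (when $\gde(\abs{x}^{2})$ is even) or of $\Q{x}$ (when it is odd) directly, so no fourth-power expansion is needed. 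The paper's approach has the minor advantage of being self-contained modulo Proposition~\ref{prop:alternatives}, whereas yours relies on one more earlier result; but since Proposition~\ref{prop:gdeeee} is already in hand, your route is the more economical one. One small correction: the bound $\gde(\abs{x}^{2})\le 1$ that you invoke appears in the discussion \emph{after} Proposition~\ref{prop:alternatives}, not before it, but it is still established prior to the proposition you are proving, so there is no circularity.
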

\begin{proof}
As discussed, the previous proposition implies that $\sqrt{2}$
divides $y$ if and only if $2$ divides $\abs y^{2}$. We apply this
to $\abs x^{2}$. By expressing $\abs x^{4}$ in terms of quadratic
forms we get: 
\[
\abs x^{4}=\P{x}^{2}+2\Q{x}^{2}+2\sqrt{2}\P{x}\Q{x}.
\]
We see that $2$ divides $\abs x^{4}$ if and only if $2$ divides
$\P{x}^{2}$, or, equivalently, $\sqrt{2}$ divides $\abs x^{2}$
if and only if $\P{x}$ even. Using the previous proposition again, this
time for $x$, we obtain the required result.
\end{proof}

\begin{prop}
Let $\gde\left(x\right)=0$ and $\gde\left(y\right)=0$. If $\sqrt{2}$
divides \textup{$\abs x^{2}$} and $\sqrt{2}$ divides $\abs y^{2}$
then $\sqrt{2}$ divides $\re\left(\sqrt{2}xy^{\ast}\right)$. \end{prop}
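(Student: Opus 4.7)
The plan is to translate everything into the bilinear form $F$ introduced earlier and then into the mod-$2$ descriptions of the quadratic forms $P$ and $Q$. First I would use the identity
\[
\re\left(\sqrt{2}xy^{\ast}\right)=\sqrt{2}\re\left(xy^{\ast}\right)=\sqrt{2}\,\F{x}{y}+\F{\sqrt{2}x}{y},
\]
obtained from the relation $\re(xy^{\ast})=F(x,y)+\tfrac{1}{\sqrt{2}}F(\sqrt{2}x,y)$ stated just after the definition of $F$. Since $\sqrt{2}\,F(x,y)$ is clearly divisible by $\sqrt{2}$ in $\Zr$, the claim reduces to showing that the integer $F(\sqrt{2}x,y)$ is divisible by $\sqrt{2}$, i.e., divisible by $2$ (an integer $n$ satisfies $\sqrt{2}\mid n$ in $\Zr$ exactly when $n$ is even, because $n/\sqrt{2}=n\sqrt{2}/2$).

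Next I would invoke the mod-$2$ expression (\ref{eq:s2xy}),
\[
\F{\sqrt{2}x}{y}\equiv\left(x_{1}+x_{3}\right)\left(y_{0}+y_{2}\right)+\left(x_{0}+x_{2}\right)\left(y_{1}+y_{3}\right)\m,
\]
so that it remains to determine the parities of $x_{0}+x_{2}$, $x_{1}+x_{3}$ and their $y$-analogues. This is exactly where the hypotheses $\gde(x)=\gde(y)=0$ together with $\sqrt{2}\mid |x|^{2},\,|y|^{2}$ come in: the preceding proposition then forces $P(x)$ even and $Q(x)$ odd (and similarly for $y$). Using the mod-$2$ descriptions (\ref{eq:ipxxm2}) and (\ref{ip2m2}), $P(x)$ even says $(x_{0}+x_{2})\equiv(x_{1}+x_{3})\m$, and $Q(x)$ odd says $(x_{0}+x_{2})(x_{1}+x_{3})\equiv 1\m$; together these force both $x_{0}+x_{2}$ and $x_{1}+x_{3}$ to be odd, and likewise for $y$.

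Substituting back, each of the two products in the mod-$2$ expression for $F(\sqrt{2}x,y)$ is an odd-times-odd, hence $\equiv 1\m$, and their sum is $\equiv 0\m$. Thus $2\mid F(\sqrt{2}x,y)$, equivalently $\sqrt{2}\mid F(\sqrt{2}x,y)$ in $\Zr$, and the identity from the first step yields $\sqrt{2}\mid\re(\sqrt{2}xy^{\ast})$.

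I do not expect any substantial obstacle: the whole argument is a short parity computation, and all of the ingredients (the expression of $\re(\sqrt{2}xy^{\ast})$ through $F$, the mod-$2$ identities for $P$, $Q$, and $F(\sqrt{2}x,y)$, and the previous proposition identifying the parities of $P(x)$ and $Q(x)$ from $\gde(x)=0$ and $\sqrt{2}\mid|x|^{2}$) are already established. The only mild subtlety worth flagging is the observation that divisibility of an ordinary integer by $\sqrt{2}$ in $\Zr$ is the same as divisibility by $2$, which is what lets us pass from the mod-$2$ computation to the desired divisibility statement.
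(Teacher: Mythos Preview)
Your proposal is correct and follows essentially the same route as the paper's proof: express $\re(\sqrt{2}xy^{\ast})=\sqrt{2}\,F(x,y)+F(\sqrt{2}x,y)$, use the previous proposition to deduce the parities of $x_{0}+x_{2}$, $x_{1}+x_{3}$ (and their $y$-analogues), and then apply the mod-$2$ formula~(\ref{eq:s2xy}) to conclude $2\mid F(\sqrt{2}x,y)$. The only cosmetic difference is that the paper uses just ``$Q(x)$ odd'' (which by~(\ref{ip2m2}) already forces both $x_{0}+x_{2}$ and $x_{1}+x_{3}$ to be odd), whereas you also invoke ``$P(x)$ even''---a harmless redundancy.
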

\begin{proof}
By the previous proposition, $\sqrt{2}$ divides $\abs x^{2}$ and $\sqrt{2}$
divides $\abs y^{2}$ implies that $\Q{x}$ and
$\Q{y}$ are odd. Formula (\ref{ip2m2}) implies that in terms of the integer number
coefficients of $x$ and $y$ integer numbers
$x_{1}+x_{3}$, $x_{0}+x_{2}$, $y_{1}+y_{3}$, $y_{0}+y_{2}$, are
all odd. Expressing $\re\left(\sqrt{2}xy^{\ast}\right)$ in terms
of $\F{\cdot}{\cdot}$, 
\[
\re\left(\sqrt{2}xy^{\ast}\right)=\sqrt{2}\F xy+\F{\sqrt{2}x}y,
\]
and using expression (\ref{eq:s2xy}), we conclude that $2$ divides $\F{\sqrt{2}x}y$; therefore
$\sqrt{2}$ divides $\re\left(\sqrt{2}xy^{\ast}\right)$.
\end{proof}

Now we show $\gde\left(\re\left(\sqrt{2}xy^{\ast}\right)\right)\ge\left\lfloor
\frac{1}{2}\left(\gde\left(\abs x^{2}\right)+\gde\left(\abs
y^{2}\right)\right)\right\rfloor $.
As we discussed in the beginning, we can assume $\gde\left(x\right)=0$
and $\gde\left(y\right)=0$ without loss of generality. This implies
$\gde\left(\abs x^{2}\right)\le 1$ and $\gde\left(\abs y^{2}\right)\le 1$.
The expression $\left\lfloor \frac{1}{2}\left(\gde\left(\abs
x^{2}\right)+\gde\left(\abs y^{2}\right)\right)\right\rfloor $
can only be equal to $0$ or $1$. The second one is only possible when $\gde\left(\abs
x^{2}\right)=1$ and $\gde\left(\abs y^{2}\right)=1$, in which case the previous proposition implies
$\gde\left(\re\left(\sqrt{2}xy^{\ast}\right)\right)\ge 1$. In the
first case inequality is true because of the non-negativity of $\gde$. 

We can also use quadratic forms to describe all numbers $z$ in the ring
$\R$ such that $\abs z^{2}=1$. Seeking a contradiction, suppose
$\sde\left(z\right)\ge 1$.
We can always write $z=\frac{x}{\left(\sqrt{2}\right)^{k}}$ where
$k=\sde\left(z\right)$ and $\gde\left(x\right)=0$. From the other
side $\abs x^{2}=\P{x}+\sqrt{2}\Q{x}=2^{k}$.
Thus we have a contradiction with the statement of Proposition \ref{prop:alternatives}.
We conclude that $z$ is an element of $\Zr$. Therefore we can
write $z$ in terms of its integer number coordinates, 
$z=z_{0}+z_{1}\w+z_{2}\w^{2}+z_{3}\w^{3}$.
Equality $\abs z^{2}=1$ implies that $\F
zz=z_{0}^{2}+z_{1}^{2}+z_{2}^{2}+z_{3}^{2}=1$.
Taking into account that $z_{j}$ are integer numbers we conclude that $z\in\left\{
\w^{k},k=0,\ldots,7\right\} $.

\section*{Appendix~B}
\label{app:B}

Here we prove that Algorithm \ref{alg:Decomposition-of-unitary} produces circuits with the minimal number
of Hadamard and T gates over the gate library $\G$ consisting of Hadamard, T, T${}^{\dagger}$,
P, P${}^{\dagger}$, and Pauli-X, Y, and Z gates.  We say that a circuit implements a unitary $U$ if the unitary corresponding 
to the circuit is equal to $U$ up to global phase.  We define integer-valued quantities $h(U)$ and $t(U)$ as 
the minimal number of Hadamard and T gates over all circuits implementing $U$.  We call a circuit 
H- or T-optimal if it contains the minimal number of H or T gates, correspondingly. 

\begin{thm}\label{prop:hnsde}
Let $U$ be a $2\by 2$ unitary over the ring $\R$ with a matrix entry $z$ such that $\sde(\abs z^{2})\ge4.$
Algorithm~\ref{alg:Decomposition-of-unitary} produces a circuit that implements $U$ over $\mathcal{G}$ with:
\begin{enumerate}
\item the minimal number of Hadamard gates and $h(U)=\sde(\abs z^{2})-1$, and 
\item the minimal number of T gates and $t(U)=h(U) - 1 + (l \mod 2) + (j \mod 2)$, where $l$ and $j$ are chosen such that $h(H T^l U T^j H)=h(U)+2$.
\end{enumerate}
\end{thm}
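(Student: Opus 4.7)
The strategy is to work in a canonical form for circuits over $\G$ and track the smallest-denominator exponent through each Hadamard.  Using the commutation relations $HXH=Z$, $HZH=X$, $HYH=-Y$, and $XTX = e^{-i\pi/4}T^{\dagger}$, together with commutativity of the diagonal subgroup generated by $T$, any circuit over $\G$ can be rewritten, without changing either its H-count or its T-count, as
\[
U = T^{a_0}\,H\,T^{a_1}\,H\,\cdots\,H\,T^{a_h}\,P,
\]
where $a_i\in\{0,1,\ldots,7\}$, $P$ is a Pauli, $h$ is the Hadamard count, and the T-count equals $|\{i\colon a_i\text{ odd}\}|$ (even powers of $T$ yield $I$, $P$, $Z$, or $P^{\dagger}$, all Clifford).

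For the first claim, the upper bound $h(U)\le \sde(\abs{z}^{2})-1$ is realized by the algorithm: by Lemma~\ref{thm:the-second} each iteration of the main loop reduces $\sde$ by exactly $1$ using a single Hadamard, giving $\sde(\abs{z}^{2})-3$ Hadamards before the lookup, and an exhaustive check over the finite table $\mathbb{S}_{3}$ shows that every residual unitary with $\sde=3$ is implementable using exactly $2$ further Hadamards.  For the matching lower bound I would induct on $h$ in the canonical form: starting from $\sde=0$ for the empty prefix, each additional $HT^{a_{i+1}}$ raises $\sde$ by at most $1$ once $\sde\ge 4$ (Lemma~\ref{thm:Main}), smaller $\sde$ is handled by a finite enumeration, and conjugation by the trailing Pauli preserves $\sde$.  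Hence $h\ge \sde(\abs{z}^{2})-1$.

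For the second claim, the condition $h(HT^{l}UT^{j}H)=h(U)+2$ selects precisely the $(l,j)$ whose parities are not absorbable at the boundary.  Indeed $HT^{k}H$ is Clifford exactly when $k$ is even ($HH=I$, $HPH=\sqrt{X}$, $HZH=X$, $HP^{\dagger}H=\sqrt{X}^{\dagger}$), so when the wrapping is put into canonical form the two added Hadamards collapse into a Clifford unless both $l+a_{0}$ and $j+a_{h}$ are odd.  The condition therefore pins $(l\bmod 2)$ and $(j\bmod 2)$ to the unique values $1-(a_{0}\bmod 2)$ and $1-(a_{h}\bmod 2)$, isolating the boundary T-contribution.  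Combined with a parity-forcing refinement of Lemma~\ref{thm:the-second}---that at each interior step of an H-optimal reduction only one parity class of $k\in\{0,1,2,3\}$ yields the required $\sde$ decrease---the total T-count formula follows by summing interior and boundary contributions, and the algorithm's choice of $k$ in each iteration is seen to realize this minimum.  The main obstacle is precisely this interior parity-forcing step: strengthening Lemma~\ref{thm:the-second}'s ``some $k$ works'' conclusion to ``only one parity of $k$ works.''  I would attack it by extending the enumeration in Algorithm~\ref{alg:Verification-of-second} to record, for each residue class $(\ox,\oy)$ modulo $2^{3}$, the parities of $k$ achieving each target value of $\gde(\abs{x+\w^{k}y}^{2})$, and then verifying the resulting forced-parity pattern by computer search.
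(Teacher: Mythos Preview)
Your plan for part~1 is essentially the paper's argument: both combine the upper bound from Lemma~\ref{thm:the-second} (each algorithm step spends one H to decrease $\sde$ by one), a finite brute-force check at the bottom, and the lower bound from Lemma~\ref{thm:Main} (each H in any circuit changes $\sde$ by at most one).

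For part~2 there is a real gap. Your parity-forcing claim---that only one parity of $k$ achieves the $\sde$-decrease---together with the boundary analysis would show that \emph{every H-optimal canonical form} has T-count equal to $(h(U)-1)+(a_0\bmod 2)+(a_h\bmod 2)$. But that is not yet a lower bound on $t(U)$: a T-optimal circuit need not be H-optimal, so it may sit in a canonical form with $h'>h(U)$ Hadamards, where many interior $a_i$ are even and the T-count is small. Your parity-forcing says nothing about such circuits, because it only constrains steps at which $\sde$ actually drops; along a non-monotone $\sde$ trajectory the even-$k$ steps are perfectly allowed. You have established the algorithm's T-count, not that it is minimal.

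The paper closes this gap with a one-line structural fact you do not invoke: every single-qubit Clifford unitary is implementable (up to global phase) with at most one Hadamard. Writing a T-optimal circuit as $C_0TC_1T\cdots TC_{t(U)}$ with each $C_i$ Clifford then gives $h(U)\le t(U)+1$ directly, i.e.\ $t(U)\ge h(U)-1$. This matches the upper bound from the HT-normal form in the special case where the circuit begins and ends with $H$; the general case is reduced to the special case by the padding $HT^lUT^jH$, which is exactly how the boundary parities $(l\bmod 2),(j\bmod 2)$ enter. No refinement of Lemma~\ref{thm:the-second} or extension of the computer search is needed.

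Incidentally, the parity-forcing you propose to verify by computer is already a corollary of part~1: once you know the algorithm is H-optimal, any interior step with even $k$ would produce a subword $HT^{2m}H$, which is Clifford and hence collapsible to at most one $H$, contradicting H-optimality. So the interior exponents are forced odd by H-optimality itself, without a separate residue enumeration.
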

\begin{proof}

\noindent {\bf 1: H-optimality.} Using brute force, we explicitly verified that the set of H-optimal circuits with
precisely 3 Hadamard gates is equal to the set of all unitaries
over the ring $\R$ with $\sde(\abs z^{2})=4$. Suppose
we have a unitary $U$ with $\sde(\abs z^{2})=n\ge 4$. With the help of
Algorithm \ref{alg:Decomposition-of-unitary} we can reduce it to
a unitary with $\sde(\abs z^{2})=4$ while using $n-4$ Hadamard
gates to accomplish this.  As such, there exists a circuit with $n-1$ Hadamard gates that
implements $U$. 

Now consider an H-optimal circuit $C$ that implements $U$. Using brute
force, we established that if $C$ has less than 3 Hadamard gates, then $\sde(\abs
z^{2})$ is less than $4$. Suppose $C$ contains $m\ge 3$ Hadamard gates. Its
prefix, containing $3$ Hadamard gates, must also be H-optimal, and
therefore $\sde(\abs z^{2})$ of the corresponding unitary is
4. Now, using the inequality from Lemma \ref{thm:Main}, we conclude
that $\sde(\abs z^{2})$ of the unitary corresponding to $C$
is less than $m+1$.  This implies $n\le m+1$.  Since we already know that
$m\le n-1$, we may conclude that $m=n-1$ and $m$ is the number
of Hadamard gates in the circuit produced by Algorithm
\ref{alg:Decomposition-of-unitary}
in combination with the brute force step.

\noindent {\bf 2: T-optimality.} To prove T-optimality we introduce a normal form for circuits over $\G$.  We call a circuit
HT-normal if there is precisely one T gate between every two H gates and, symmetrically, precisely one H gate between every two T gates.  
It is not difficult to modify Algorithm \ref{alg:Decomposition-of-unitary} to produce a circuit in HT-normal 
form while preserving its H-optimality.  To accomplish that, first, recall that HT$^3$ = HZT$^{\dagger}$ and that all circuits generated 
during the brute force stage are both H-optimal and in the HT-normal form.  Second, any circuit produced by the algorithm 
is H-optimal and does not contain a non-H-optimal (up to global phase) subcircuit HT$^2$H = HPH = $\w$PHP. 

We will show that any H-optimal circuit in HT-normal form is also T-optimal. We start with a special case of 
HT-normal circuits---those that begin and end with the Hadamard gate, in other words, those that can be written as HS$_1$H$\ldots$HS$_k$H, and are H-optimal. Let $U$ be a unitary corresponding
to this circuit. Due to HT-normality, each $S_i$ contains exactly one T gate, the number of T gates in the circuit is $k$, and $h(U)=k+1$; therefore, $t(U)\le h(U)-1$. To prove that
$t(U) = h(U)-1$, it suffices to show that $t(U) \ge h(U)-1$. Let us write a T-optimal circuit 
for $U$ as C$_0$TC$_1$T$\ldots$TC$_k$. Each subcircuit C$_k$ implements a unitary 
from the Clifford group. Each unitary from the single-qubit Clifford group can be implemented using 
at most one H gate (recall, that we are concerned with the implementations up to global phase), therefore $ h(U) \le t(U)+1$, as required. 

In the general case, consider a circuit obtained by Algorithm \ref{alg:Decomposition-of-unitary} and implementing a unitary $V$ with $h(V)\ge3$ that is H-optimal 
and written in HT-normal form and show that it is T-optimal. We
can write it as $\mathrm{S}_0\mathrm{HS}_1\mathrm{H}\ldots \mathrm{HS}_k \mathrm{HS}_{k+1}$.
By Lemma \ref{thm:the-second} we can always find such $l$ and $j$ that 
$C:=$HT$^l$S$_0$HS$_1$H$\ldots$HS$_k$H S$_{k+1}$T$^j$H is also an H-optimal circuit. Indeed,
according to Lemma \ref{thm:the-second}, using the connection between $\sde(\cdot)$ and $h(\cdot)$ described in the first part of the proof, given $h(V)=k+1$ we can always find $l$ such that $h(\mathrm{HT}^l V)=k+2$. From the other side, 
circuit HT$^l$S$_0$HS$_1$H$\ldots$HS$_k$HS$_{k+1}$ contains $k+2$ Hadamard gates and therefore is
H-optimal. We repeat the same procedure to find $j$. 

Considering the different possible values of $l$ and $j$ allows to complete the proof of the Theorem.  This is somewhat tedious, and we illustrate how to handle different cases 
with a representative example of $l=3$ and $j=2$. In such a case, we can rewrite circuit $C$ as $C'=$HTPS$_0$HS$_1$H$\ldots$HS$_k$HS$_{k+1}$PH. We conclude 
that S$_0$ must have zero T gates and S$_{k+1}$ must have one T gate. Otherwise subcircuits HTPS$_0$H and 
HS$_{k+1}$PH will not be H-optimal. As such, we reduced the problem to the special case considered above, therefore circuit $C'$ is T-optimal and S$_0$HS$_1$H$\ldots$HS$_k$HS$_{k+1}$ is T-optimal as its subcircuit. In the general case, the following formula may be developed $t\left(V\right) = h(U) - 1 + (l \mod 2) + (j \mod 2)$.
\end{proof}

\end{document}